\documentclass[12pt, draftclsnofoot, onecolumn]{IEEEtran}

\usepackage[font={small}]{caption}

\usepackage[dvips]{color}

\usepackage{graphicx}
\usepackage{amssymb,amsmath}
\usepackage{amsthm}
\usepackage{textcomp}
\usepackage{indentfirst}
\usepackage{url}
\usepackage[nosort]{cite}
\usepackage{color}
\usepackage{marvosym}
\usepackage{amsmath}
\usepackage{cite}
\usepackage{algorithmic}
\usepackage{algorithm}
\usepackage{epsf}
\usepackage{epstopdf}
\usepackage{balance}

\usepackage{comment}
\usepackage{todonotes}
\usepackage{epsf}
\usepackage{epsfig}
\usepackage{times}
\usepackage{epsfig}
\usepackage{graphicx}
\usepackage{bbold}
\usepackage{mathtools}
\usepackage{mathrsfs}
\usepackage{amssymb}
\usepackage{pdfpages}
\usepackage{epstopdf}
\usepackage{dsfont}
\usepackage{lettrine} 
\usepackage{amsmath,epsfig,amssymb,algorithm,amsthm,cite,url}
\usepackage{subcaption}
\allowdisplaybreaks
\usepackage{csquotes}
\usepackage{verbatim}
\usepackage[english]{babel}
\usepackage{amsmath,amssymb}

\captionsetup{%
	figurename=Fig.,
}
\usepackage{verbatim}

\newtheorem{theorem}{\bf Theorem}

\newtheorem{proposition}{\bf Proposition}
\newtheorem{lemma}{\bf Lemma}

\newtheorem{definition}{\bf Definition}

\newlength{\aligntop}
\setlength{\aligntop}{-0.6em}
\newlength{\alignbot}
\setlength{\alignbot}{-0.85\baselineskip}
\addtolength{\alignbot}{-0.2em}
\makeatletter

\renewenvironment{align}{%
\vspace{\aligntop}
\start@align\@ne\st@rredfalse\m@ne
}{%
\math@cr \black@\totwidth@
\egroup
\ifingather@
\restorealignstate@
\egroup
\nonumber
\ifnum0=`{\fi\iffalse}\fi
\else
$$%
\fi
\ignorespacesafterend%
\vspace{\alignbot}\par\noindent
}

\begin{document}

\addtolength{\textfloatsep}{-8pt}

\title{Cellular-Connected UAVs over 5G: Deep Reinforcement Learning for Interference Management\vspace{-0.5cm}}


\author{
\IEEEauthorblockN{Ursula Challita\IEEEauthorrefmark{1}, Walid Saad\IEEEauthorrefmark{2}, and Christian Bettstetter\IEEEauthorrefmark{3}}\\
\IEEEauthorblockA{\IEEEauthorrefmark{1}\small
School of Informatics,
The University of Edinburgh, Edinburgh, UK. Email: ursula.challita@ed.ac.uk.}\\
\IEEEauthorblockA{\IEEEauthorrefmark{2}\small Wireless@VT, Bradley Department of Electrical and Computer Engineering, Virginia Tech, Blacksburg, VA, USA. Email: walids@vt.edu.}\\
\IEEEauthorblockA{\IEEEauthorrefmark{3}\small Alpen-Adria-Universit\"at Klagenfurt, Institute of Networked and Embedded Systems, Klagenfurt, Austria. Email: christian.bettstetter@aau.at.}\vspace{-1cm}
\thanks{A preliminary version of this work has been accepted for publication at the IEEE International Conference on Communications (ICC) 2018~\cite{ICC_paper}.}
}

\vspace{-0.8cm}
\maketitle\vspace{-0.8cm}
\vspace{-0.3cm}
\begin{abstract}
\vspace{-0.2cm}
\boldmath
In this paper, an interference-aware path planning scheme for a network of cellular-connected unmanned aerial vehicles (UAVs) is proposed. In particular, each UAV aims at achieving a tradeoff between maximizing energy efficiency and minimizing both wireless latency and the interference level caused on the ground network along its path. The problem is cast as a dynamic game among UAVs. To solve this game, a deep reinforcement learning algorithm, based on echo state network (ESN) cells, is proposed. The introduced deep ESN architecture is trained to allow each UAV to map each observation of the network state to an action, with the goal of minimizing a sequence of time-dependent utility functions. Each UAV uses ESN to learn its optimal path, transmission power level, and cell association vector at different locations along its path. The proposed algorithm is shown to reach a subgame perfect Nash equilibrium (SPNE) upon convergence. Moreover, an upper and lower bound for the altitude of the UAVs is derived thus reducing the computational complexity of the proposed algorithm. Simulation results show that the proposed scheme achieves better wireless latency per UAV and rate per ground user (UE) while requiring a number of steps that is comparable to a heuristic baseline that considers moving via the shortest distance towards the corresponding destinations. The results also show that the optimal altitude of the UAVs varies based on the ground network density and the UE data rate requirements and plays a vital role in minimizing the interference level on the ground UEs as well as the wireless transmission delay of the UAV.
\end{abstract}
\IEEEpeerreviewmaketitle
%

\begin{IEEEkeywords}
Unmanned aerial vehicles (UAV); echo state network (ESN); deep learning; deep reinforcement learning; game theory; path planning
\end{IEEEkeywords}

\section{Introduction}

Cellular-connected unmanned aerial vehicles (UAVs) will be an integral component of future wireless networks as evidenced by recent interest from academia, industry, and 3GPP standardizations~\cite{3GPP_standards, Qualcomm_UAV, LTEintheSky, SkyNotLimit, coexistence_ground_aerial, christian}. Unlike current wireless UAV connectivity that relies on short-range communication range (e.g., WiFi, bluetooth, and radio waves), cellular-connected UAVs allow beyond line-of-sight control, low latency, real
time communication, robust security, and ubiquitous coverage. Such \emph{cellular-connected UAV-user equipments (UEs)} will thus enable a myriad of applications ranging from real-time video streaming to surveillance. Nevertheless, the ability of UAV-UEs to establish line-of-sight (LoS) connectivity to cellular base stations (BSs) is both a blessing and a curse. On the one hand, it enables high-speed data access for the UAV-UEs. On the other hand, it can lead to substantial inter-cell mutual interference among the UAVs and to the ground users. As such, a wide-scale deployment of UAV-UEs is only possible if interference management challenges are addressed~\cite{LTEintheSky, SkyNotLimit, coexistence_ground_aerial}.

While some literature has recently studied the use of UAVs as mobile BSs~\cite{U_globecom, ferryMessage, zhang_trajectory_power, path_planning_WCNC, mohammad_UAV, qingqing_UAV, chen2016caching}, the performance analysis of cellular-connected UAV-UEs (\emph{short-handed hereinafter as UAVs}) remains relatively scarce~\cite{LTEintheSky, SkyNotLimit, coexistence_ground_aerial, reshaping_cellular}. For instance, in~\cite{LTEintheSky}, the authors study the impact of UAVs on the uplink performance of a ground LTE network. Meanwhile, the work in~\cite{SkyNotLimit} uses measurements and ray tracing simulations to study the airborne connectivity requirements and propagation characteristics of UAVs. The authors in~\cite{coexistence_ground_aerial} analyze the coverage probability of the downlink of a cellular network that serves both aerial and ground users. In~\cite{reshaping_cellular}, the authors consider a network consisting of both ground and aerial UEs and derive closed-form expressions for the coverage probability of the ground and drone UEs. Nevertheless, this prior art is limited to studying the impact that cellular-connected UAVs have on the ground network. Indeed, the existing literature~\cite{LTEintheSky, SkyNotLimit, coexistence_ground_aerial, reshaping_cellular} does not provide any concrete solution for optimizing the performance of a cellular network that serves both aerial and ground UEs in order to overcome the interference challenge that arises in this context. UAV trajectory optimization is essential in such scenarios. An online path planning that accounts for wireless metrics is vital and would, in essence, assist in addressing the aforementioned interference challenges along with new improvements in the design of the network, such as 3D frequency resue. Such a path planning scheme allows the UAVs to adapt their movement based on the rate requirements of both aerial UAV-UEs and ground UEs, thus improving the overall network performance. The problem of UAV path planning has been studied mainly for non-UAV-UE applications~\cite{ferryMessage, zhang_trajectory_power, path_planning_WCNC, networked_camera} with~\cite{path_cellular_UAVs} being the only work considering a cellular-connected UAV-UE scenario. In~\cite{ferryMessage}, the authors propose a distributed path planning algorithm for multiple UAVs to deliver delay-sensitive information to different ad-hoc nodes. The authors in~\cite{zhang_trajectory_power} optimize a UAV's trajectory in an energy-efficient manner. The authors in~\cite{path_planning_WCNC} propose a mobility model that combines area coverage, network connectivity, and UAV energy constraints for path planning. In~\cite{networked_camera}, the authors propose a fog-networking-based system architecture to coordinate a network of UAVs for video services in sports events. However, despite being interesting, the body of work in~\cite{ferryMessage, zhang_trajectory_power, path_planning_WCNC} and~\cite{networked_camera} is restricted to UAVs as BSs and does not account for UAV-UEs and their associated interference challenges. Hence, the approaches proposed therein cannot readily be used for cellular-connected UAVs. On the other hand, the authors in~\cite{path_cellular_UAVs} propose a path planning scheme for minimizing the time required by a cellular-connected UAV to reach its destination. Nevertheless, this work is limited to one UAV and does not account for the interference that cellular-connected UAVs cause on the ground network during their mission. Moreover, the work in~\cite{path_cellular_UAVs} relies on offline optimization techniques that cannot adapt to the uncertainty and dynamics of a cellular network.



The main contribution of this paper is a novel deep reinforcement learning (RL) framework based on echo state network (ESN) cells for optimizing the trajectories of multiple cellular-connected UAVs in an online manner. This framework will allow cellular-connected UAVs to minimize the interference they cause on the ground network as well as their wireless transmission latency. To realize this, we propose a dynamic noncooperative game in which the players are the UAVs and the objective of each UAV is to \emph{autonomously} and \emph{jointly} learn its path, transmit power level, and association vector. For our proposed game, the UAV's cell association vector, trajectory optimization, and transmit power level are closely coupled with each other and their optimal values vary based on the dynamics of the network. Therefore, a major challenge in this game is the need for each UAV to have full knowledge of the ground network topology, ground UEs service requirements, and other UAVs' locations. Consequently, to solve this game, we propose a deep RL ESN-based algorithm, using which the UAVs can predict the dynamics of the network and subsequently determine their optimal paths as well as the allocation of their resources along their paths. Unlike previous studies which are either centralized or rely on the coordination among UAVs, our approach is based on a self-organizing path planning and resource allocation scheme. In essence, two important features of our proposed algorithm are \emph{adaptation} and \emph{generalization}. Indeed, UAVs can take decisions for \emph{unseen} network states, based on the reward they got from previous states. This is mainly due to the use of ESN cells which enable the UAVs to retain their previous memory states. We have shown that the proposed algorithm reaches a subgame perfect Nash equilibrium (SPNE) upon convergence. Moreover, upper and lower bounds on the UAVs' altitudes, that guarantee a maximum interference level on the ground network and a maximum wireless transmission delay for the UAV, have been derived. To our best knowledge, \emph{this is the first work that exploits the framework of deep ESN for interference-aware path planning of cellular-connected UAVs}. Simulation results show that the proposed approach improves the tradeoff between energy efficiency, wireless latency, and the interference level caused on the ground network. Results also show that each UAV's altitude is a function of the ground network density and the UAV's objective function and is an important factor in achieving the UAV's target.

The rest of this paper is organized as follows. Section~\ref{system_model} presents the system model. Section~\ref{game} describes the proposed noncooperative game model. The deep RL ESN-based algorithm is proposed in Section~\ref{algorithm}. In Section~\ref{simulation}, simulation results are analyzed. Finally, conclusions are drawn in Section~\ref{conclusion}.

\vspace{-0.5cm}
\section{System Model}\label{system_model}
Consider the uplink (UL) of a wireless cellular network composed of a set $\mathcal{S}$ of $S$ ground BSs, a set $\mathcal{Q}$ of $Q$ ground UEs, and a set $\mathcal{J}$ of $J$ cellular-connected UAVs. The UL is defined as the link from UE $q$ or UAV $j$ to BS $s$. Each BS $s \in \mathcal{S}$ serves a set $\mathcal{K}_s\subseteq\mathcal{Q}$ of $K_s$ UEs and a set $\mathcal{N}_s\subseteq\mathcal{J}$ of $N_s$ cellular-connected UAVs. The total system bandwidth, $B$, is divided into a set $\mathcal{C}$ of $C$ resource blocks (RBs). Each UAV $j\in \mathcal{N}_s$ is allocated a set $\mathcal{C}_{j,s}\subseteq\mathcal{C}$ of $C_{j,s}$ RBs and each UE $q\in \mathcal{K}_s$ is allocated a set $\mathcal{C}_{q,s}\subseteq\mathcal{C}$ of $C_{q,s}$ RBs by its serving BS $s$. At each BS $s$, a particular RB $c \in \mathcal{C}$ is allocated to \emph{at most} one UAV $j\in \mathcal{N}_s$, or UE $q\in \mathcal{K}_s$.

An airborne Internet of Things (IoT) is considered in which the UAVs are equipped with different IoT devices, such as cameras, sensors, and GPS that can be used for various applications such as surveillance, monitoring, delivery and real-time video streaming. The 3D coordinates of each UAV $j \ \in \mathcal{J}$ and each ground user $q \ \in \mathcal{Q}$ are, respectively, $(x_j, y_j, h_j)$ and $(x_q, y_q, 0)$. All UAVs are assumed to fly at a fixed altitude $h_j$ above the ground (as done in~\cite{zhang_trajectory_power, path_cellular_UAVs, relaying, optimization}) while the horizonal coordinates $(x_j, y_j)$ of each UAV $j$ vary in time. Each UAV $j$ needs to move from an initial location $o_j$ to a final destination $d_j$ while transmitting \emph{online} its mission-related data such as sensor recordings, video streams, and location updates. We assume that the initial and final locations of each UAV are pre-determined based on its mission objectives.


For ease of exposition, we consider a virtual grid for the mobility of the UAVs. We discretize the space into a set $\mathcal{A}$ of $A$ equally sized unit areas. The UAVs move along the center of the areas $c_a=(x_a, y_a, z_a)$, which yields a finite set of possible paths $\boldsymbol{p}_j$ for each UAV $j$. The path $\boldsymbol{p}_j$ of each UAV $j$ is defined as a sequence of area units $\boldsymbol{p}_j=(a_1, a_2, \cdots, a_l)$ such that $a_1=o_j$ and $a_l=d_j$. The area size of the discretized area units $(a_1, a_2, \cdots, a_A) \in \mathcal{A}$ is chosen to be sufficiently small such that the UAVs' locations can be assumed to be approximately constant within each area even at the maximum UAV's speed, as commonly done in the literature~\cite{relaying}. We assume a constant speed $0 < V_j \leq \widehat{V}_j$ for each UAV where $\widehat{V}_j$ is the maximum speed of UAV $j$. Therefore, the time required by each UAV to travel between any two unit areas is constant.

\vspace{-0.1cm}
\subsection{Channel Models}
\vspace{-0.1cm}
We consider the sub-6 GHz band and the free-space path loss model for the UAV-BS data link. The path loss between UAV $j$ at location $a$ and BS $s$, $\xi_{j,s,a}$, is given by~\cite{hourani}:
\begin{align}
\xi_{j,s,a} (\mathrm{dB})= 20\ \mathrm{log}_{10} (d_{j,s,a}) + 20\ \mathrm{log}_{10} (\hat{f}) - 147.55,
\end{align}
\noindent where $\hat{f}$ is the system center frequency and $d_{j,s,a}$ is the Euclidean distance between UAV $j$ at location $a$ and BS $s$. We consider a Rician distribution for modeling the small-scale fading between UAV $j$ and ground BS $s$ thus accounting for the LoS and multipath scatterers that can be experienced at the BS. In particular, adopting the Rician channel model for the UAV-BS link is validated by the fact that the channel between a given UAV and a ground BS is mainly dominated by a LoS link~\cite{zhang_trajectory_power}. We assume that the Doppler spread due to the mobility of the UAVs is compensated for based on existing techniques such as frequency synchronization using a phase-locked loop~\cite{mengali} as done in~\cite{zhang_trajectory_power} and~\cite{relaying}.



For the terrestrial UE-BS links, we consider a Rayleigh fading channel. For a carrier frequency, $\hat{f}$, of 2 GHz, the path loss between UE $q$ and BS $s$ is given by~\cite{pathloss_ground}:
\begin{align}
\zeta_{q,s}(\mathrm{dB}) =
15.3+37.6\ \mathrm{log}_{10}(d_{q,s}),
\end{align}
\noindent where $d_{q\textrm{,}s}$ is the Euclidean distance between UE $q$ and BS $s$.

The average signal-to-interference-plus-noise ratio (SINR), $\Gamma_{j,s,c,a}$, of the UAV-BS link between UAV $j$ at location $a$ $(a \in \mathcal{A})$ and BS $s$ over RB $c$ will be:
\begin{align}\label{SNIR}
\Gamma_{j,s,c,a}=\frac{P_{j,s,c,a} h_{j,s,c,a}}{I_{j,s,c}+B_c N_0},
\end{align}
\noindent where $P_{j,s,c,a}=\widehat{P}_{j,s,a}/C_{j,s}$ is the transmit power of UAV $j$ at location $a$ to BS $s$ over RB $c$ and $\widehat{P}_{j,s,a}$ is the total transmit power of UAV $j$ to BS $s$ at location $a$. Here, the total transmit power of UAV $j$ is assumed to be distributed uniformly among all of its associated RBs. $h_{j,s,c,a}=g_{j,s,c,a}10^{-\xi_{j,s,a}/10}$ is the channel gain between UAV $j$ and BS $s$ on RB $c$ at location $a$ where $g_{j,s,c,a}$ is the Rician fading parameter. $N_0$ is the noise power spectral density and $B_{c}$ is the bandwidth of an RB $c$. $I_{j,s,c}= \sum_{r=1, r\neq s}^S (\sum_{k=1}^{K_r} P_{k,r,c} h_{k,s,c} + \sum_{n=1}^{N_r} P_{n,r,c,a'} h_{n,s,c,a'})$ is the total interference power on UAV $j$ at BS $s$ when transmitting over RB $c$, where $\sum_{r=1, r\neq s}^S \sum_{k=1}^{K_r} P_{k,r,c} h_{k,s,c}$ and $\sum_{r=1, r\neq s}^S\sum_{n=1}^{N_r} P_{n,r,c,a'} h_{n,s,c,a'}$ correspond, respectively, to the interference from the $K_r$ UEs and the $N_r$ UAVs (at their respective locations $a'$) connected to neighboring BSs $r$ and transmitting using the same RB $c$ as UAV $j$. $h_{k,s,c}=m_{k,s,c}10^{-\zeta_{k,s}/10}$ is the channel gain between UE $k$ and BS $s$ on RB $c$ where $m_{k,s,c}$ is the Rayleigh fading parameter. Therefore, the achievable data rate of UAV $j$ at location $a$ associated with BS $s$ can be defined as $R_{j,s,a}=\sum_{c=1}^{C_{j,s}} B_{c} \mathrm{log}_2(1+\Gamma_{j,s,c,a})$. 

Given the achievable data rate of UAV $j$ and assuming that each UAV is an M/D/1 queueing system, the corresponding latency over the UAV-BS wireless link is given by~\cite{delay_book}:
\begin{align}\label{delay_eqn}
\tau_{j,s,a}=\frac{\lambda_{j,s}}{2\mu_{j,s,a}\textrm{(}\mu_{j,s,a}-\lambda_{j,s}\textrm{)}}+\frac{1}{\mu_{j,s,a}},
\end{align}
\noindent where $\lambda_{j,s}$ is the average packet arrival rate (packets/s) traversing link $(j,s)$ and originating from UAV $j$. $\mu_{j,s,a}=R_{j,s,a}/\nu$
is the service rate over link $(j,s)$ at location $a$ where $\nu$ is the packet size. On the other hand, the achievable data rate for a ground UE $q$ served by BS $s$ is given by:
\vspace{-0.2cm}
\begin{align}
R_{q,s}=\sum_{c=1}^{C_{q,s}}B_c\mathrm{log}_2\Big(1+\frac{P_{q,s,c}h_{q,s,c}}{I_{q,s,c}+B_cN_0}\Big),
\end{align}
\noindent where $h_{q,s,c}=m_{q,s,c}10^{-\zeta_{q,s}/10}$ is the channel gain between UE $q$ and BS $s$ on RB $c$ and $m_{q,s,c}$ is the Rayleigh fading parameter. $P_{q,s,c}=\widehat{P}_{q,s}/C_{q,s}$ is the transmit power of UE $q$ to its serving BS $s$ on RB $c$ and $\widehat{P}_{q,s}$ is the total transmit power of UE $q$. Here, we also consider equal power allocation among the allocated RBs for the ground UEs. $I_{q,s,c}= \sum_{r=1, r\neq s}^S (\sum_{k=1}^{K_r} P_{k,r,c} h_{k,s,c} + \sum_{n=1}^{N_r} P_{n,r,c,a'} h_{n,s,c,a'})$ is the total interference power experienced by UE $q$ at BS $s$ on RB $c$ where $\sum_{r=1, r\neq s}^S \sum_{k=1}^{K_r} P_{k,r,c} h_{k,s,c}$ and $\sum_{r=1, r\neq s}^S\sum_{n=1}^{N_r} P_{n,r,c,a'} h_{n,s,c,a'}$ correspond, respectively, to the interference from the $K_r$ UEs and the $N_r$ UAVs (at their respective locations $a'$) associated with the neighboring BSs $r$ and transmitting using the same RB $c$ as UE $q$.

\subsection{Problem Formulation}
Our objective is to find the optimal path for each UAV $j$ based on its mission objectives as well as its interference on the ground network. Thus, we seek to minimize: a) the interference level that each UAV causes on the ground UEs and other UAVs, b) the transmission delay over the wireless link, and c) the time needed to reach the destination. To realize this, we optimize the paths of the UAVs jointly with the cell association vector and power control at each location $a \in \mathcal{A}$ along each UAV's path.
We consider a directed graph $G_j=(\mathcal{V}, \mathcal{E}_j)$ for each UAV $j$ where $\mathcal{V}$ is the set of vertices corresponding to the centers of the unit areas $a \in \mathcal{A}$ and $\mathcal{E}_j$ is the set of edges formed along the path of UAV $j$. We let $\boldsymbol{\widehat{P}}$ be the transmission power vector with each element $\widehat{P}_{j,s,a}\in[0, \overline{P}_{j}]$ being the transmission power level of UAV $j$ to its serving BS $s$ at location $a$ where $\overline{P}_{j}$ is the maximum transmission power of UAV $j$. $\boldsymbol{\alpha}$ is the path formation vector with each element $\alpha_{j,a,b}\in\{0,1\}$ indicating whether or not a directed link is formed from area $a$ towards area $b$ for UAV $j$, i.e., if UAV $j$ moves from $a$ to $b$ along its path. $\boldsymbol{\beta}$ is the UAV-BS association vector with each element $\beta_{j,s,a}\in\{0,1\}$ denoting whether or not UAV $j$ is associated with BS $s$ at location $a$. Next, we present our optimization problem whose goal is to determine the path of each UAV along with its cell association vector and its transmit power level at each location $a$ along its path $\boldsymbol{p}_j$:
\vspace{-0.4cm}
\begin{multline}\label{obj}
\hspace{-0.5cm}\min_{\boldsymbol{\widehat{P}}, \boldsymbol{\alpha}, \boldsymbol{\beta}}\vartheta\sum_{j=1}^{J}\sum_{s=1}^S \sum_{c=1}^{C_{j,s}}\sum_{a=1}^A \sum_{r=1, r\neq s}^S \frac{\widehat{P}_{j,s,a} h_{j,r,c,a}}{C_{j,s}}+\varpi \sum_{j=1}^J\sum_{a=1}^A\sum_{b=1, b\neq a}^A\alpha_{j,a,b}
+ \phi\sum_{j=1}^J\sum_{s=1}^S\sum_{a=1}^A\beta_{j,s,a}\tau_{j,s,a},
\end{multline}
\vspace{-0.4cm}
\begin{align}\label{cons_1}
\sum_{b=1, b\neq a}^A\alpha_{j,b,a} \leq 1 \;\;\forall j\in \mathcal{J}, a\in \mathcal{A},
\end{align}
\vspace{-0.45cm}
\begin{align}\label{cons_2}
\sum_{a=1, a\neq o_j}^A\alpha_{j,o_j,a} \textrm{=} 1 \;\;\forall j\in \mathcal{J}, \sum_{a=1, a\neq d_j}^A\alpha_{j,a,d_j} \textrm{=} 1 \;\;\forall j\in \mathcal{J},
\end{align}
\vspace{-0.4cm}
\begin{align}\label{cons_3}
\sum_{a\textrm{=}1, a\neq b}^A\alpha_{j,a,b}-\sum_{f\textrm{=}1, f\neq b}^A\alpha_{j,b,f}\textrm{=} 0 \;\forall j\in \mathcal{J},b\in \mathcal{A} \;(b\neq o_j, b\neq d_j),
\end{align}
\vspace{-0.5cm}
\begin{align}\label{cons_4}
\widehat{P}_{j,s,a}\geq\sum_{b=1, b\neq a}^A\alpha_{j,b,a} \;\;\forall j\in \mathcal{J}, s\in \mathcal{S}, a\in \mathcal{A},
\end{align}
\vspace{-0.74cm}
\begin{align}\label{cons_44}
\widehat{P}_{j,s,a}\geq\beta_{j,s,a} \;\;\forall j\in \mathcal{J}, s\in \mathcal{S}, a\in \mathcal{A},
\end{align}
\vspace{-0.75cm}
\begin{align}\label{cons_6}
\sum_{s=1}^S \beta_{j,s,a} - \sum_{b=1, b\neq a}^A\alpha_{j,b,a}=0\;\;\;\forall j\in \mathcal{J}, a\in A,
\end{align}
\vspace{-0.37cm}
\begin{align}\label{cons_7}
\sum_{c=1}^{C_{j,s}}\Gamma_{j,s,c,a}\geq\beta_{j,s,a}\overline{\Gamma}_j
\;\;\;\forall j\in \mathcal{J}, s\in \mathcal{S}, a\in \mathcal{A},
\end{align}
\vspace{-0.68cm}
\begin{align}\label{cons_8}
0\leq \widehat{P}_{j,s,a}\leq \overline{P}_{j} \;\;\forall j\in \mathcal{J}\textrm{,} s\in \mathcal{S}\textrm{,} \; a\in \mathcal{A},
\end{align}
\vspace{-1.03cm}
\begin{align}\label{cons_9}
\alpha_{j\textrm{,}a\textrm{,}b}\in\{0\textrm{,}1\}\textrm{,}\;  \beta_{j\textrm{,}s\textrm{,}a}\in\{0\textrm{,}1\} \;\;\forall j\in \mathcal{J}\textrm{,}\; s\in \mathcal{S}\textrm{,} \;\;a,b\in \mathcal{A}\textrm{.}
\end{align}
The objective function in (\ref{obj}) captures the total interference level that the UAVs cause on neighboring BSs along their paths, the length of the paths of the UAVs, and their wireless transmission delay. $\vartheta$, $\varpi$ and $\phi$ are multi-objective weights used to control the tradeoff between the three considered metrics. These weights can be adjusted to meet the requirements of each UAV's mission. For instance, the time to reach the destination is critical in search and rescue applications while the latency is important for online video streaming applications. (\ref{cons_1}) guarantees that each area $a$ is visited by UAV $j$ at most once along its path $\boldsymbol{p}_j$. (\ref{cons_2}) guarantees that the trajectory of each UAV $j$ starts at its initial location $o_j$ and ends at its final destination $d_j$. (\ref{cons_3}) guarantees that if UAV $j$ visits area $b$, it should also leave from area $b$ $(b\neq o_j, b\neq d_j)$. (\ref{cons_4}) and (\ref{cons_44}) guarantee that UAV $j$ transmits to BS $s$ at area $a$ with power $\widehat{P}_{j,s,a}>0$ only if UAV $j$ visits area $a$, i.e., $a\in \boldsymbol{p}_j$ and such that $j$ is associated with BS $s$ at location $a$. (\ref{cons_6}) guarantees that each UAV $j$ is associated with one BS $s$ at each location $a$ along its path $\boldsymbol{p}_j$. (\ref{cons_7}) guarantees an upper limit, $\overline{\Gamma}_j$, for the SINR value $\Gamma_{j,s,c,a}$ of the transmission link from UAV $j$ to BS $s$ on RB $c$ at each location $a$, $a\in \boldsymbol{p}_j$. This, in turn, ensures successful decoding of the transmitted packets at the serving BS. The value of $\overline{\Gamma}_j$ is application and mission specific. Note that the SINR check at each location $a$ is valid for our problem since we consider small-sized area units. (\ref{cons_8}) and (\ref{cons_9}) are the feasibility constraints. The formulated optimization problem is a mixed integer non-linear program, which is computationally complex to solve for large networks.

To address this challenge, we adopt a distributed approach in which each UAV decides autonomously on its next path location along with its corresponding transmit power and association vector. In fact, a centralized approach requires control signals to be transmitted to the UAVs at all time. This might incur high round-trip latencies that are not desirable for real-time applications such as online video streaming. Further, a centralized approach requires a central entity to have full knowledge of the current state of the network and the ability to communicate with all UAVs at all time. However, this might not be feasible in case the UAVs belong to different operators or in scenarios in which the environment changes dynamically. Therefore, we next propose a distributed approach for each UAV $j$ to learn its path $\boldsymbol{p}_j$ along with its transmission power level and association vector at each location $a$ along its path in an autonomous and online manner.

\section{Towards a Self-Organizing Network of an Airborne Internet of Things}\label{game}
\subsection{Game-Theoretic Formulation}
Our objective is to develop a distributed approach that allows each UAV to take actions in an autonomous and online manner. For this purpose, we model the multi-agent path planning problem as a finite dynamic noncooperative game model $\mathcal{G}$ with perfect information~\cite{walid_book}. Formally, we define the game as $\mathcal{G}=(\mathcal{J}, \mathcal{T}, \mathcal{Z}_j, \mathcal{V}_j, \Pi_j, u_j)$ with the set $\mathcal{J}$ of UAVs being the agents. $\mathcal{T}$ is a finite set of stages which correspond to the steps required for all UAVs to reach their sought destinations. $\mathcal{Z}_j$ is the set of actions that can be taken by UAV $j$ at each $t \in \mathcal{T}$, $\mathcal{V}_j$ is the set of all observed network states by UAV $j$ up to stage $T$, $\Pi_j$ is a set of probability distributions defined over all $z_j \in \mathcal{Z}_j$, and $u_j$ is the payoff function of UAV $j$. At each stage $t \in \mathcal{T}$, the UAVs take actions simultaneously. In particular, each UAV $j$ aims at determining its path $\boldsymbol{p}_j$ to its destination along with its optimal transmission power and cell association vector for each location $a \in \mathcal{A}$ along its path $\boldsymbol{p}_j$. Therefore, at each $t$, UAV $j$ chooses an action $z_j(t) \in \mathcal{Z}_j$ composed of the tuple $\boldsymbol{z}_j(t)=(\boldsymbol{a}_j(t), \widehat{P}_{j,s,a}(t), \boldsymbol{\beta}_{j,s,a}(t))$, where $\boldsymbol{a}_j(t)$=\{left, right, forward, backward, no movement\} corresponds to a fixed step size, $\widetilde{a}_j$, in a given direction. $\widehat{P}_{j,s,a}(t)=[\widehat{P}_{1}, \widehat{P}_{2}, \cdots, \widehat{P}_{O}]$ corresponds to $O$ different maximum transmit power levels for each UAV $j$ and $\boldsymbol{\beta}_{j,s,a}(t)$ is the UAV-BS association vector.


For each UAV $j$, let $\mathcal{L}_j$ be the set of its $L_j$ nearest BSs. The observed network state by UAV $j$ at stage $t$, $\boldsymbol{v}_j(t) \in \mathcal{V}_j$, is:
\begin{align}\label{input}
\boldsymbol{v}_j(t)\textrm{=}\Big[\{\delta_{j\textrm{,}l\textrm{,}a}(t)\textrm{,} \theta_{j\textrm{,}l\textrm{,}a}(t)\}_{l=1}^{L_j}\textrm{,} \theta_{j\textrm{,}d_j\textrm{,}a}(t)\textrm{,} \{x_j(t)\textrm{,} y_j(t)\}_{j \in \mathcal{J}} \Big]\textrm{,}
\end{align}
\noindent where $\delta_{j,l,a}(t)$ is the Euclidean distance from UAV $j$ at location $a$ to BS $l$ at stage $t$, $\theta_{j,l,a}$ is the orientation angle in the xy-plane from UAV $j$ at location $a$ to BS $l$ defined as $\mathrm{tan}^{-1}(\Delta y_{j,l}/\Delta x_{j,l})$~\cite{orientation_angle} where $\Delta y_{j,l}$ and $\Delta x_{j,l}$ correspond to the difference in the $x$ and $y$ coordinates of UAV $j$ and BS $l$, $\theta_{j,d_j,a}$ is the orientation angle in the xy-plane from UAV $j$ at location $a$ to its destination $d_j$ defined as $\mathrm{tan}^{-1}(\Delta y_{j,d_j}/\Delta x_{j,d_j})$, and $\{x_j(t)\textrm{,} y_j(t)\}_{j \in \mathcal{J}}$ are the horizonal coordinates of all UAVs at stage $t$.
For our model, we consider different range intervals for mapping each of the orientation angle and distance values, respectively, into different states.


Moreover, based on the optimization problem defined in (\ref{obj})-(\ref{cons_9}) and by incorporating the Lagrangian penalty method into the utility function definition for the SINR constraint~(\ref{cons_7}), the resulting utility function for UAV $j$ at stage $t$, $u_j(\boldsymbol{v}_j(t), \boldsymbol{z}_j(t), \boldsymbol{z}_{-j}(t))$, will be given by:
\vspace{-0.6cm}

\begin{align}\label{utility_t}
u_j(\boldsymbol{v}_j(t), \boldsymbol{z}_j(t), \boldsymbol{z}_{-j}(t))\textrm{=}
\begin{cases}
\Phi(\boldsymbol{v}_j(t)\textrm{,} \boldsymbol{z}_j(t)\textrm{,} \boldsymbol{z}_{-j}(t)) \textrm{+} C\textrm{,} \; \mathrm{if} \; \delta_{j,d_j,a}(t)<\delta_{j,d_j,a'}(t-1)\textrm{,}\\
\Phi(\boldsymbol{v}_j(t)\textrm{,} \boldsymbol{z}_j(t)\textrm{,} \boldsymbol{z}_{-j}(t))\textrm{,} \; \mathrm{if} \; \delta_{j,d_j,a}(t)=\delta_{j,d_j,a'}(t-1)\textrm{,}\\
\Phi(\boldsymbol{v}_j(t)\textrm{,} \boldsymbol{z}_j(t)\textrm{,} \boldsymbol{z}_{-j}(t)) \textrm{-} C \textrm{,} \; \mathrm{if} \; \delta_{j,d_j,a}(t)>\delta_{j,d_j,a'}(t-1)\textrm{,}
\end{cases}
\end{align}
\vspace{-0.3cm}
\noindent where $\Phi(\boldsymbol{v}_j(t)\textrm{,} \boldsymbol{z}_j(t)\textrm{,} \boldsymbol{z}_{-j}(t))$ is defined as:
\vspace{-0.1cm}
\begin{multline}
\hspace{-0.3cm}\Phi(\boldsymbol{v}_j(t)\textrm{,} \boldsymbol{z}_j(t)\textrm{,} \boldsymbol{z}_{-j}(t))\textrm{=}-\vartheta' \sum_{c=1}^{C_{j,s}(t)} \sum_{r=1, r\neq s}^S \frac{\widehat{P}_{j,s,a}(\boldsymbol{v}_j(t)) h_{j,r,c,a}(t)}{C_{j,s}(t)} - \phi'\tau_{j,s,a}(\boldsymbol{v}_j(t)\textrm{,} \boldsymbol{z}_j(t)\textrm{,} \boldsymbol{z}_{-j}(t)) \\- \varsigma (\mathrm{min}(0, \sum_{c=1}^{C_{j,s}(t)}\Gamma_{j,s,c,a}(\boldsymbol{v}_j(t)\textrm{,} \boldsymbol{z}_j(t)\textrm{,} \boldsymbol{z}_{-j}(t))-\overline{\Gamma}_j))^2,
\end{multline}

\noindent subject to (\ref{cons_1})-(\ref{cons_6}), (\ref{cons_8}) and (\ref{cons_9}). $\varsigma$ is the penalty coefficient for (\ref{cons_7}) and $C$ is a constant parameter. $a'$ and $a$ are the locations of UAV $j$ at $(t-1)$ and $t$ where  $\delta_{j,d_j,a}$ is the distance between UAV $j$ and its destination $d_j$. It is worth noting here that the action space of each UAV $j$ and, thus, the complexity of the proposed game $\mathcal{G}$ increases exponentially when updating the 3D coordinates of the UAVs. Nevertheless, each UAV's altitude must be bounded in order to guarantee an SINR threshold for the UAV and a minimum achievable data rate for the ground UEs. Next, we derive an upper and lower bound for the optimal altitude of any given UAV $j$  based on the proposed utility function in (\ref{utility_t}). In essence, such bounds are valid for all values of the multi-objective weights $\vartheta '$, $\phi '$, and $\varsigma$.

\begin{theorem}\label{theorem_altitude}
\emph{For all values of $\vartheta '$, $\phi '$, and $\varsigma$, a given network state $\boldsymbol{v}_j(t)$, and a particular action $\boldsymbol{z}_j(t)$, the upper and lower bounds for the altitude of UAV $j$ are, respectively, given by:}
\begin{align}
h_j^{\mathrm{max}}(\boldsymbol{v}_j(t)\textrm{,} \boldsymbol{z}_j(t)\textrm{,} \boldsymbol{z}_{-j}(t))= \mathrm{max} (\chi, \hat{h}_j^{\mathrm{max}}(\boldsymbol{v}_j(t)\textrm{,} \boldsymbol{z}_j(t)\textrm{,} \boldsymbol{z}_{-j}(t))),
\end{align}
\vspace{-0.9cm}
\begin{align}
h_j^{\mathrm{min}}(\boldsymbol{v}_j(t)\textrm{,} \boldsymbol{z}_j(t)\textrm{,} \boldsymbol{z}_{-j}(t))= \mathrm{max} (\chi, \hat{h}_j^{\mathrm{min}}(\boldsymbol{v}_j(t)\textrm{,} \boldsymbol{z}_j(t)\textrm{,} \boldsymbol{z}_{-j}(t))),
\end{align}

\noindent \emph{where $\chi$ corresponds to the minimum altitude at which a UAV can fly. $\hat{h}_j^{\mathrm{max}}(\boldsymbol{v}_j(t)\textrm{,} \boldsymbol{z}_j(t)\textrm{,} \boldsymbol{z}_{-j}(t))$ and $\hat{h}_j^{\mathrm{min}}(\boldsymbol{v}_j(t)\textrm{,} \boldsymbol{z}_j(t)\textrm{,} \boldsymbol{z}_{-j}(t))$} \emph{are expressed as:}
\vspace{-0.3cm}
\begin{multline}
\hat{h}_j^{\mathrm{max}}(\boldsymbol{v}_j(t)\textrm{,} \boldsymbol{z}_j(t)\textrm{,} \boldsymbol{z}_{-j}(t))= \\ \sqrt{\frac{\widehat{P}_{j,s,a}(\boldsymbol{v}_j(t))}{C_{j,s}(t) \cdot \overline{\Gamma}_j \cdot \left(\frac{4 \pi \hat{f}}{\hat{c}}\right)^2} \cdot \sum_{c=1}^{C_{j,s}(t)}\frac{g_{j,s,c,a}(t)}{I_{j,s,c}(t)+B_cN_0} - (x_j - x_s)^2 - (y_j - y_s)^2},
\end{multline}
\emph{and}
\begin{align}
\hat{h}_j^{\mathrm{min}}(\boldsymbol{v}_j(t)\textrm{,} \boldsymbol{z}_j(t)\textrm{,} \boldsymbol{z}_{-j}(t))= \max_r \hat{h}_{j,r}^{\mathrm{min}}(\boldsymbol{v}_j(t)\textrm{,} \boldsymbol{z}_j(t)\textrm{,} \boldsymbol{z}_{-j}(t)),
\end{align}
\emph{where
$\hat{h}_{j,r}^{\mathrm{min}}(\boldsymbol{v}_j(t)\textrm{,} \boldsymbol{z}_j(t)\textrm{,} \boldsymbol{z}_{-j}(t))$ is the minimum altitude that UAV $j$ should operate at with respect to a particular neighboring BS $r$ and is expressed as:}
\begin{align}
\hat{h}_{j,r}^{\mathrm{min}}(\boldsymbol{v}_j(t)\textrm{,} \boldsymbol{z}_j(t)\textrm{,} \boldsymbol{z}_{-j}(t))= \sqrt{\frac{\widehat{P}_{j,s,a}(\boldsymbol{v}_j(t)) \cdot \sum_{c=1}^{C_{j,s}(t)} g_{j,r,c,a}(t)}{C_{j,s}(t) \cdot \left(\frac{4 \pi \hat{f}}{\hat{c}}\right)^2 \cdot  \sum_{c=1}^{C_{j,s}(t)} \bar{I}_{j,r,c,a}} - (x_j - x_r)^2 - (y_j - y_r)^2},
\end{align}
\end{theorem}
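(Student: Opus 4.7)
The plan is to read off each bound from one of the constraints the altitude must satisfy at location $a$: the upper bound from the SINR decoding constraint (\ref{cons_7}) applied to UAV $j$'s own link to its serving BS $s$, and the lower bound from a per-RB interference budget $\bar{I}_{j,r,c,a}$ imposed on every neighboring BS $r$ to guarantee a minimum data rate for the ground UEs. The final clamping by $\chi$ is just the physical floor on the altitude at which a rotary-wing UAV can operate.

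First I will make explicit the geometry of the free-space model. Since $20\log_{10}(4\pi/\hat{c})\approx -147.55$ when $\hat{c}$ is the speed of light in m/s, the path-loss expression above can be inverted to the Friis form $10^{-\xi_{j,s,a}/10}=\big(\tfrac{4\pi\hat{f}}{\hat{c}}\big)^{-2}d_{j,s,a}^{-2}$, so the channel gain is $h_{j,s,c,a}=g_{j,s,c,a}\big(\tfrac{4\pi\hat{f}}{\hat{c}}\big)^{-2}d_{j,s,a}^{-2}$. I will then split $d_{j,s,a}^2 = (x_j-x_s)^2 + (y_j-y_s)^2 + h_j^2$, which is what lets altitude be isolated at the end.

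For the upper bound, I substitute the channel model into (\ref{SNIR}), use $P_{j,s,c,a}=\widehat{P}_{j,s,a}/C_{j,s}$, and sum over the $C_{j,s}(t)$ RBs, which factors out $1/d_{j,s,a}^2$ because only $g_{j,s,c,a}$ and the interference-plus-noise term depend on $c$. Imposing (\ref{cons_7}) produces an inequality of the form $d_{j,s,a}^2 \le \frac{\widehat{P}_{j,s,a}}{C_{j,s}\,\overline{\Gamma}_j\,(4\pi\hat{f}/\hat{c})^2}\sum_c \frac{g_{j,s,c,a}}{I_{j,s,c}+B_cN_0}$. Solving for $h_j$ via the geometry identity above yields $\hat{h}_j^{\max}$, and I then take $\max(\chi,\hat{h}_j^{\max})$ to respect the floor. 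For the lower bound I dualize the argument: the interference that UAV $j$ contributes to a neighboring BS $r$ on RB $c$ is $\frac{\widehat{P}_{j,s,a}}{C_{j,s}}\cdot g_{j,r,c,a}\big(\tfrac{4\pi\hat{f}}{\hat{c}}\big)^{-2}d_{j,r,a}^{-2}$, and requiring that the aggregate interference across the UAV's RBs not exceed $\sum_c \bar{I}_{j,r,c,a}$ (the budget consistent with the ground UEs' rate requirement) gives the reverse inequality $d_{j,r,a}^2 \ge \frac{\widehat{P}_{j,s,a}\sum_c g_{j,r,c,a}}{C_{j,s}(4\pi\hat{f}/\hat{c})^2\sum_c\bar{I}_{j,r,c,a}}$. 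Decomposing $d_{j,r,a}^2$ and solving for $h_j$ yields $\hat{h}_{j,r}^{\min}$. Since the interference bound must hold for \emph{every} neighboring BS simultaneously, the tightest constraint wins, hence the $\max_r$; clamping by $\chi$ then gives the stated $h_j^{\min}$.

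The routine calculus is straightforward; the main obstacle will be justifying that the bounds are independent of the multi-objective weights $\vartheta'$, $\phi'$, $\varsigma$, as claimed. This requires noting that $\vartheta'$ and $\phi'$ only scale terms inside $\Phi$ that are monotone in the same direction as the SINR/interference constraints (so they influence the optimum altitude but never relax the feasibility region), while $\varsigma$ penalizes only violations of (\ref{cons_7}), which is exactly the constraint already enforced in the upper-bound derivation. Hence any altitude optimizing $u_j$ subject to the Lagrangian penalty must automatically satisfy both bounds, and the derivation reduces to inverting (\ref{cons_7}) and the implicit interference budget — a purely geometric/algebraic exercise once the Friis identity is in place.
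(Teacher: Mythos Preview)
Your proposal is correct and follows essentially the same route as the paper: invert the Friis path-loss to expose $d_{j,s,a}^{-2}$, enforce the SINR threshold (\ref{cons_7}) at equality to extract $\hat{h}_j^{\mathrm{max}}$, enforce the per-neighbor interference budget $\sum_c\bar{I}_{j,r,c,a}$ at equality to extract $\hat{h}_{j,r}^{\mathrm{min}}$ and take $\max_r$, then clamp both by $\chi$. The only cosmetic difference is in justifying weight-independence: the paper argues by fixing the weights to the extreme cases $\vartheta'=0$ (upper bound) and $\phi'=\varsigma=0$ (lower bound), whereas you argue directly that the bounds are feasibility-driven and hence unaffected by how $\vartheta',\phi',\varsigma$ scale the objective --- both arrive at the same conclusion.
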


\begin{proof}
See Appendix A.
\end{proof}

From the above theorem, we can deduce that the optimal altitude of the UAVs is a function of their objective function, location of the ground BSs, network design parameters, and the interference level from other UEs and UAVs in the network. Therefore, at each time step $t$, UAV $j$ would adjust its altitude level based on the values of $h_j^{\mathrm{max}}(\boldsymbol{v}_j(t)\textrm{,} \boldsymbol{z}_j(t)\textrm{,} \boldsymbol{z}_{-j}(t)$ and $h_j^{\mathrm{min}}(\boldsymbol{v}_j(t)\textrm{,} \boldsymbol{z}_j(t)\textrm{,} \boldsymbol{z}_{-j}(t)$ thus adapting to the dynamics of the network. In essence, the derived upper and lower bounds for the optimal altitude of the UAVs allows a reduction of the action space of game $\mathcal{G}$ thus simplifying the process needed for the UAVs to find a solution, i.e., equilibrium, of the game. Next, we analyze the equilibrium point of the proposed game $\mathcal{G}$.




\vspace{-0.3cm}
\subsection{Equilibrium Analysis}
\vspace{-0.1cm}
For our game $\mathcal{G}$, we are interested in studying the subgame perfect Nash equilibrium (SPNE) in behavioral strategies. An SPNE is a profile of strategies which induces a Nash equilibrium (NE) on every subgame of the original game. Moreover, a \emph{behavioral strategy} allows each UAV to assign independent probabilities to the set of actions at each network state that is independent across different network states.
Here, note that there always exists at least one SPNE for any finite horizon extensive game with perfect information [Selten's Theorem]~\cite{SPNE_existence}. Let $\boldsymbol{\pi}_j(\boldsymbol{v}_j(t))=(\pi_{j,z_1}(\boldsymbol{v}_j(t)), \pi_{j,z_2}(\boldsymbol{v}_j(t)), \cdots, \pi_{j,\boldsymbol{z}_{\mid Z_j\mid}}(\boldsymbol{v}_j(t))) \in \Pi_j$ be the behavioral strategy of UAV $j$ at state $\boldsymbol{v}_j(t)$ and let $\Delta (\mathcal{Z})$ be the set of all probability distributions over the action space $\mathcal{Z}$. Next, we define the notion of an SPNE.

\begin{definition}\emph{A behavioral strategy $(\boldsymbol{\pi}^*_1(\boldsymbol{v}_j(t))\textrm{,} \cdots\textrm{,} \boldsymbol{\pi}_J^*(\boldsymbol{v}_j(t))) = (\boldsymbol{\pi}_j^*(\boldsymbol{v}_j(t)), \boldsymbol{\pi}^*_{-j}(\boldsymbol{v}_j(t)))$ constitutes a} subgame perfect Nash equilibrium \emph{if, $\forall j \in \mathcal{J}$, $\forall t \in \mathcal{T}$ and $\forall \boldsymbol{\pi}_j(\boldsymbol{v}_j(t)) \in \Delta (\mathcal{Z})$, $\overline{u}_j(\boldsymbol{\pi}^*_j(\boldsymbol{v}_j(t)), \boldsymbol{\pi}^*_{-j}(\boldsymbol{v}_j(t)))\geq \overline{u}_j(\boldsymbol{\pi}_j(\boldsymbol{v}_j(t)), \boldsymbol{\pi}^*_{-j}(\boldsymbol{v}_j(t)))$.}
\end{definition}

Therefore, each state $\boldsymbol{v}_j(t)$ and stage $t$, the goal of each UAV $j$ is to maximize its expected sum of discounted rewards, which is computed as the summation of the immediate reward for a given state along with the expected discounted utility of the next states:
\vspace{-0.4cm}
\begin{multline}\label{expected_utility}
\overline{u}(\boldsymbol{v}_j(t), \boldsymbol{\pi}_j(\boldsymbol{v}_j(t))\textrm{,} \boldsymbol{\pi}_{\textrm{-}j}(\boldsymbol{v}_j(t)))=\mathds{E}_{\boldsymbol{\pi}_j(t)}\left\{\sum_{l=0}^\infty \gamma^{l} u_j(\boldsymbol{v}_j(t+l)\textrm{,} \boldsymbol{z}_j(t+l)\textrm{,} \boldsymbol{z}_{\textrm{-}j}(t+l))| \boldsymbol{v}_{j,0}=\boldsymbol{v}_j\right\}\\
\textrm{=}\sum_{\boldsymbol{z}\in\mathcal{Z}} \sum_{l=0}^\infty \gamma^{l} u_j(\boldsymbol{v}_j(t+l)\textrm{,} \boldsymbol{z}_j(t+l)\textrm{,} \boldsymbol{z}_{\textrm{-}j}(t+l)) \prod_{j=1}^J \pi_{j\textrm{,}z_j}(\boldsymbol{v}_j(t+l))\textrm{,}
\end{multline}
\vspace{-0.2cm}

\noindent where $\gamma^l \in (0, 1)$ is a discount factor for delayed rewards and $\mathds{E}_{\boldsymbol{\pi}_j(\boldsymbol{v}_j(t))}$ denotes an expectation over trajectories of states and actions, in which actions are selected according to $\boldsymbol{\pi}_j(\boldsymbol{v}_j(t))$. Here, $\boldsymbol{u}_j$ is the short-term reward for being in state $\boldsymbol{v}_j$ and $\boldsymbol{\overline{u}}_j$ is the expected long-term total reward from state $\boldsymbol{v}_j$ onwards.

Here, note that the UAV's cell association vector, trajectory optimization, and transmit power level are closely coupled with each other and their corresponding optimal values vary based on the UAVs' objectives. In a multi-UAV network, each UAV must have full knowledge of the future reward functions at each information set and thus for all future network states in order to find the SPNE. This in turn necessitates knowledge of all possible future actions of all UAVs in the network and becomes challenging as the number of UAVs increases. To address this challenge, we rely on deep recurrent neural networks (RNNs)~\cite{RNN_survey}. In essence, RNNs exhibit dynamic temporal behavior and are characterized by their adaptive memory that enables them to store necessary previous state information to predict future actions. On the other hand, deep neural networks are capable of dealing with large datasets. Therefore, next, we develop a novel deep RL based on ESNs, a special kind of RNN, for solving the SPNE of our game $\mathcal{G}$.

\section{Deep Reinforcement Learning for Online Path Planning and Resource Management}\label{algorithm}
In this section, we first introduce a deep ESN-based architecture that allows the UAVs to store previous states whenever needed while being able to learn future network states. Then, we propose an RL algorithm based on the proposed deep ESN architecture to learn an SPNE for our proposed game.



\subsection{Deep ESN Architecture}\label{ESN_architecture}
ESNs are a new type of RNNs with feedback connections that belong to the family of reservoir computing (RC)~\cite{RNN_survey}. An ESN is composed of an input weight matrix $\boldsymbol{W}_{\mathrm{in}}$, a recurrent matrix $\boldsymbol{W}$, and an output weight matrix $\boldsymbol{W}_{\mathrm{out}}$. Because only the output weights are altered, ESN training is typically quick and computationally efficient compared to training other RNNs. Moreover, multiple non-linear reservoir layers can be stacked on top of each other resulting in a \emph{deep ESN architecture}.
Deep ESNs exploit the advantages of a hierarchical temporal feature representation at different levels of abstraction while preserving the RC training efficiency. They can learn data representations at different levels of abstraction, hence disentangling the difficulties in modeling complex tasks by representing them in terms of simpler ones hierarchically. Let $N_{j,R}^{(n)}$ be the number of internal units of the reservoir of UAV $j$ at layer $n$, $N_{j,U}$ be the external input dimension of UAV $j$ and $N_{j,L}$ be the number of layers in the stack for UAV $j$. Next, we define the following ESN components:
\begin{itemize}
  \item $\boldsymbol{v}_j(t) \in \mathds{R}^{N_{j,U}}$ the external input of UAV $j$ at stage $t$ which effectively corresponds to the current network state,
  \item $\boldsymbol{x}^{(n)}_j(t) \in \mathds{R}^{N_{j,R}^{(n)}}$ as the state of the reservoir of UAV $j$ at layer $n$ at stage $t$,
  \item $\boldsymbol{W}_{j, \mathrm{in}}^{(n)}$ as the input-to-reservoir matrix of UAV $j$ at layer $n$, where $\boldsymbol{W}_{j, \mathrm{in}}^{(n)} \in \mathds{R}^{N_{j,R}^{(n)} \times N_{j,U}}$ for $n=1$, and $\boldsymbol{W}_{j, \mathrm{in}}^{(n)} \in \mathds{R}^{N_{j,R}^{(n)} \times N_{j,R}^{(n-1)}}$ for $n>1$,
  \item $\boldsymbol{W}_j^{(n)} \in \mathds{R}^{N_{j,R}^{(n)} \times N_{j,R}^{(n)}}$ as the recurrent reservoir weight matrix for UAV $j$ at layer $n$,
  \item $\boldsymbol{W}_{j, \mathrm{out}} \in \mathds{R}^{\mid\mathcal{Z}_j\mid \times (N_{j,U}+\sum_{n}N_{j,R}^{(n)})}$ as the reservoir-to-output matrix of UAV $j$ for layer $n$ only.
\end{itemize}

The objective of the deep ESN architecture is to approximate a function $\boldsymbol{F}_j=(F_j^{1}, F_j^{2}, \cdots, F_j^{N_{j,L}})$ for learning an SPNE for each UAV $j$ at each stage $t$. For each $n=1, 2, \cdots, N_{j,L}$, the function $F_j^{(n)}$ describes the evolution of the state of the reservoir at layer $n$, i.e., $\boldsymbol{x_{j}^{(n)}}(t)=F_j^{(n)}(\boldsymbol{v}_j(t), \boldsymbol{x}_j^{(n)}(t-1))$ for $n=1$ and $\boldsymbol{x_{j}^{(n)}}(t)=F_j^{(n)}(\boldsymbol{x}_j^{(n-1)}(t), \boldsymbol{x}_j^{(n)}(t-1))$ for $n>1$. $\boldsymbol{W}_{j, \mathrm{out}}$ and $\boldsymbol{x}^{(n)}_j(t)$ are initialized to zero while $\boldsymbol{W}_{j, \mathrm{in}}^{(n)}$ and $\boldsymbol{W}_j^{(n)}$ are randomly generated. Note that although the dynamic reservoir is initially generated randomly, it is combined later with the external input, $\boldsymbol{v}_j(t)$, in order to store the network states and with the trained output matrix, $\boldsymbol{W}_{j, \mathrm{out}}$, so that it can approximate the reward function. Moreover, the spectral radius of $\boldsymbol{W}_j^{(n)}$ (i.e., the largest eigenvalue in absolute value), $\rho_j^{(n)}$, must be strictly smaller than 1 to guarantee the stability of the reservoir~\cite{echo_state_property}. In fact, the value of $\rho_j^{(n)}$ is related to the variable memory length of the reservoir that enables the proposed deep ESN framework to store necessary previous state information, with larger values of $\rho_j^{(n)}$ resulting in longer memory length.


We next define the deep ESN components: the input and reward functions. For each deep ESN of UAV $j$, we distinguish between two types of inputs: external input, $\boldsymbol{v}_j(t)$, that is fed to the first layer of the deep ESN and corresponds to the current state of the network and input that is fed to all other layers for $n>1$. For our proposed deep ESN, the input to any layer $n>1$ at stage $t$ corresponds to the state of the previous layer, $\boldsymbol{x}_j^{(n-1)}(t)$. Define $\widetilde{u}_j(\boldsymbol{v}_j(t), \boldsymbol{z}_j(t), \boldsymbol{z}_{-j}(t))= u_j(\boldsymbol{v}_j(t), \boldsymbol{z}_j(t), \boldsymbol{z}_{-j}(t)) \prod_{j=1}^J \pi_{j,z_j}(\boldsymbol{v}_j(t))$ as the expected value of the instantaneous utility function $u_j(\boldsymbol{v}_j(t), \boldsymbol{z}_j(t), \boldsymbol{z}_{-j}(t))$ in (\ref{utility_t}) for UAV $j$ at stage $t$. Therefore,
the reward that UAV $j$ obtains from action $\boldsymbol{z}_j$ at a given network state $\boldsymbol{v}_j(t)$:
\vspace{-0.32cm}
\begin{multline}\label{reward}
r_j(\boldsymbol{v}_j(t), \boldsymbol{z}_j(t), \boldsymbol{z}_{-j}(t))
\textrm{=}
\begin{cases}
\widetilde{u}_j(\boldsymbol{v}_j(t), \boldsymbol{z}_j(t), \boldsymbol{z}_{\textrm{-}j}(t)) \textrm{,} \; \mathrm{if\; UAV} \;j\; \mathrm{reaches} \; d_j\textrm{,}\\
\widetilde{u}_j(\boldsymbol{v}_j(t), \boldsymbol{z}_j(t), \boldsymbol{z}_{\textrm{-}j}(t))\textrm{+}\gamma \mathrm{max}_{\boldsymbol{z}_j \in \mathcal{Z}_j} \boldsymbol{W}_{j\textrm{,} \mathrm{out}}(\boldsymbol{z}_j(t \textrm{+} 1)\textrm{,} t \textrm{+}1) \\ \hspace{0.4 cm} [\boldsymbol{v}'_j(t),
\boldsymbol{x}'^{(1)}_j(t), \boldsymbol{x}'^{(2)}_j(t), \cdots, \boldsymbol{x}'^{(n)}_j(t)]\textrm{,} \; \mathrm{otherwise}\textrm{.}
\end{cases}
\end{multline}\raisetag{3\baselineskip}

\noindent Here, $\boldsymbol{v}'_j(t+1)$ and $\boldsymbol{x}'^{(n)}_j(t)$, correspond, respectively, to the next network state and reservoir state of layer $(n)$, at stage $(t+1)$, upon taking actions $\boldsymbol{z}_j(t)$ and $\boldsymbol{z}_{-j}(t)$ at stage $t$. Fig.~\ref{Deep_ESN} shows the proposed reservoir architecture of the deep ESN consisting of two layers.

\begin{figure}[t!]
  \begin{center}
  \centering
  \vspace{-0.1cm}
   \includegraphics[width=13cm]{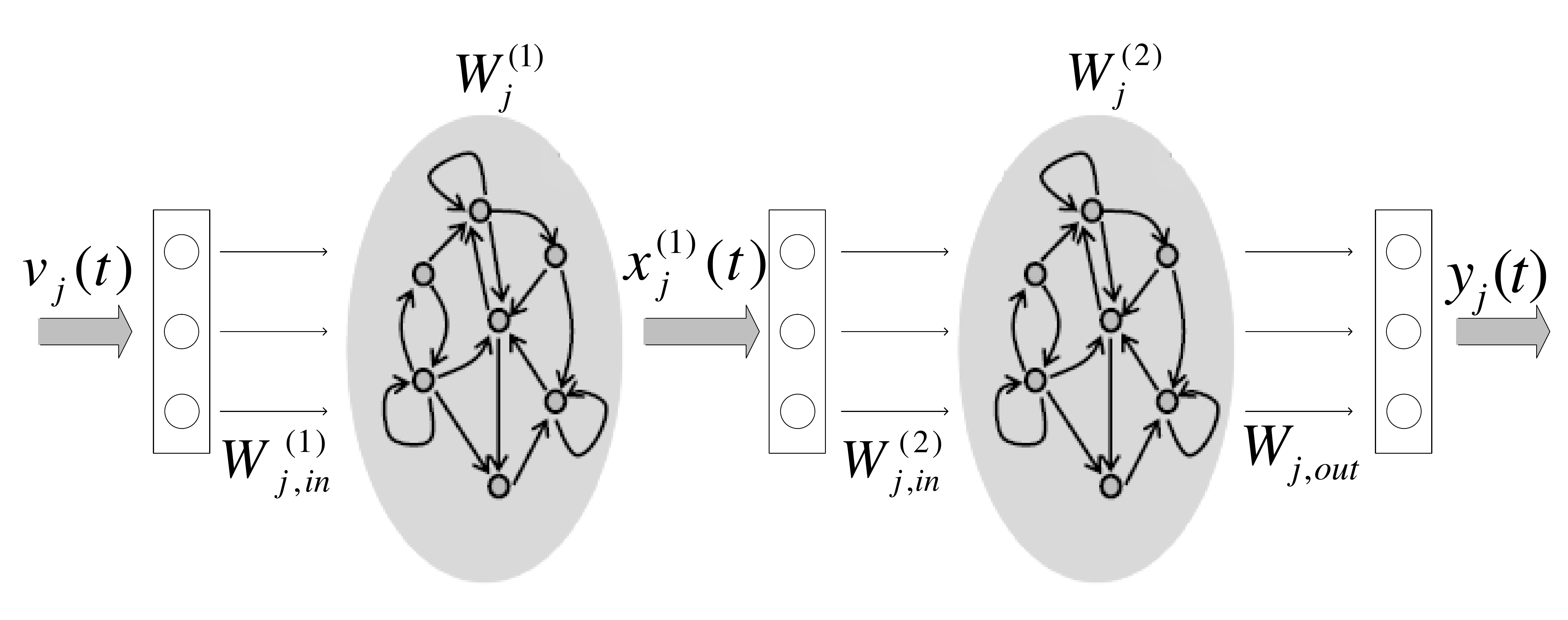} 
   \vspace{-0.5cm}
   \caption{Proposed Deep ESN architecture.}\label{Deep_ESN}
    \vspace{-0.7cm}
  \end{center}
\end{figure}



\subsection{Update Rule Based on Deep ESN}
We now introduce the deep ESN's update phase that each UAV uses to store and estimate the reward function of each path and resource allocation scheme at a given stage $t$. In particular, we consider leaky integrator reservoir units~\cite{leaky_integrator} for updating the state transition functions $\boldsymbol{x}^{(n)}_j(t)$ at stage $t$. Therefore, the state transition function of the first layer $\boldsymbol{x}^{(1)}_j(t)$ will be:
\begin{align}\label{state_1}
\boldsymbol{x}^{(1)}_j(t)= (1-\omega_j^{(1)})\boldsymbol{x}_j^{(1)}(t-1)+\omega_j^{(1)}\mathrm{tanh}(\boldsymbol{W}_{j, \mathrm{in}}^{(1)}\boldsymbol{v}_j(t)+\boldsymbol{W}_j^{(1)}\boldsymbol{x}_j^{(1)}(t-1)),
\end{align}

\noindent where $\omega_j^{(n)} \in [0, 1]$ is the leaking parameter at layer $n$ for UAV $j$ which relates to the speed of the reservoir dynamics in response to the input, with larger values of $\omega_j^{(n)}$ resulting in a faster response of the corresponding $n$-th reservoir to the input. The state transition of UAV $j$, $\boldsymbol{x}^{(n)}_j(t)$, for $n>1$ is given by:
\begin{align}\label{state_n}
\boldsymbol{x}^{(n)}_j(t)= (1-\omega_j^{(n)})\boldsymbol{x}_j^{(n)}(t-1)+\omega_j^{(n)}\mathrm{tanh}(\boldsymbol{W}_{j,\mathrm{in}}^{(n)}\boldsymbol{x}_j^{(n-1)}(t)+\boldsymbol{W}_j^{(n)}\boldsymbol{x}_j^{(n)}(t-1)),
\end{align}

The output $y_j(t)$ of the deep ESN at stage $t$ is used to estimate the reward of each UAV $j$ based on the current adopted action $\boldsymbol{z}_j(t)$ and $\boldsymbol{z}_{-j}(t)$ of UAV $j$ and other UAVs $(-j)$, respectively, for the current network state $\boldsymbol{v}_j(t)$ after training $\boldsymbol{W}_{j, \mathrm{out}}$. It can be computed as:
\begin{align}\label{output}
y_j(\boldsymbol{v}_j(t), \boldsymbol{z}_j(t))=\boldsymbol{W}_{j, \mathrm{out}}(\boldsymbol{z}_j(t), t) [\boldsymbol{v}_j(t), \boldsymbol{x}^{(1)}_j(t), \boldsymbol{x}^{(2)}_j(t), \cdots, \boldsymbol{x}^{(n)}_j(t)].
\end{align}

We adopt a temporal difference RL approach for training the output matrix $W_{j, \mathrm{out}}$ of the deep ESN architecture. In particular, we employ a linear gradient descent approach using the reward error signal, given by the following update rule~\cite{RL_ESN}:
\begin{multline}\label{W_out}
\hspace{-0.2cm}\boldsymbol{W}_{j\textrm{,} \mathrm{out}}(\boldsymbol{z}_j(t)\textrm{,} t\textrm{+}1)\textrm{=}\boldsymbol{W}_{j\textrm{,} \mathrm{out}}(\boldsymbol{z}_j(t)\textrm{,} t)\textrm{+}\lambda_j (
r_j(\boldsymbol{v}_j(t)\textrm{,} \boldsymbol{z}_j(t)\textrm{,} \boldsymbol{z}_{\textrm{-}j}(t)) -y_j(\boldsymbol{v}_j(t)\textrm{,} \boldsymbol{z}_j(t)))
[\boldsymbol{v}_j(t)\textrm{,} \\ \boldsymbol{x}^{(1)}_j(t)\textrm{,} \boldsymbol{x}^{(2)}_j(t)\textrm{,} \cdots\textrm{,} \boldsymbol{x}^{(n)}_j(t)]^T\textrm{.}
\end{multline}

Here, note that the objective of each UAV is to minimize the value of the error function $e_j(\boldsymbol{v}_j(t))= \left| r_j(\boldsymbol{v}_j(t)\textrm{,} \boldsymbol{z}_j(t)\textrm{,} \boldsymbol{z}_{\textrm{-}j}(t)) - y_j(\boldsymbol{v}_j(t)\textrm{,} \boldsymbol{z}_j(t))\right|$.

\subsection{Proposed Deep RL Algorithm}
Based on the proposed deep ESN architecture and update rule, we next introduce a multi-agent deep RL framework that the UAVs can use to learn an SPNE in behavioral strategies for the game $\mathcal{G}$. The algorithm is divided into two phases: \emph{training and testing}. In the former, UAVs are trained offline before they become active in the network using the architecture of Subsection~\ref{ESN_architecture}. The testing phase corresponds to the actual execution of the algorithm after which the weights of $\boldsymbol{W}_{j, \mathrm{out}}, \forall j \in \mathcal{J}$ have been optimized and is implemented on each UAV for execution during run time.

\begin{algorithm}[t!] \scriptsize
\caption{Training phase of the proposed deep RL algorithm}
\label{training_algorithm}
\begin{algorithmic}[t!]
\STATE \textbf{Initialization:}\\
$\boldsymbol{\pi}_{j,z_j}(\boldsymbol{v}_j(t))=\frac{1}{\mid \mathcal{A}_j\mid} \forall t\in T, z_j \in \mathcal{Z}_j$, $y_j(\boldsymbol{v}_j(t), \boldsymbol{z}_{j}(t))=0$, $\boldsymbol{W}_{j, \mathrm{in}}^{(n)}$, $\boldsymbol{W}_{j}^{(n)}$, $\boldsymbol{W}_{j, \mathrm{out}}$.
\\
\vspace{0.2cm}
\FOR {The number of training iterations}
\WHILE{At least one UAV $j$ has not reached its destination $d_j$,}
\vspace{0.05cm}
\FOR{all UAVs $j$ (in a parallel fashion)}
\STATE \textbf{Input:} Each UAV $j$ receives an input $\boldsymbol{v}_j(t)$ based on (\ref{input}).
\STATE \textbf{Step 1: Action selection}\\ Each UAV $j$ selects a random action $\boldsymbol{z}_j(t)$ with probability $\epsilon$,\\
Otherwise, UAV $j$ selects $\boldsymbol{z}_j(t)= \mathrm{argmax}_{z_j \in \mathcal{Z}_j} y_{j}\left(\boldsymbol{v}_{j}(t), \boldsymbol{z}_{j}(t)\right)$.
\STATE \textbf{Step 2: Location, cell association and transmit power update}\\
Each UAV $j$ updates its location, cell association and transmission power level based on the selected action $\boldsymbol{z}_j(t)$.\\
\STATE \textbf{Step 3: Reward computation}\\
Each UAV $j$ computes its reward values based on (\ref{reward}).\\
\STATE \textbf{Step 4: Action broadcast}\\ Each UAV $j$ broadcasts its selected action $\boldsymbol{z}_j(t)$ to all other UAVs.\\
\STATE \textbf{Step 5: Deep ESN update}\\
- Each UAV $j$ updates the state transition vector $\boldsymbol{x}_j^{(n)}(t)$ for each layer $(n)$ of the deep ESN architecture based on (\ref{state_1}) and (\ref{state_n}).\\
- Each UAV $j$ computes its output $y_j\left(\boldsymbol{v}_{j}(t), \boldsymbol{z}_{j}(t)\right)$ based on (\ref{output}).\\
- The weights of the output matrix $\boldsymbol{W}_{j,\mathrm{out}}$ of each UAV $j$ are updated based on the linear gradient descent update rule given in (\ref{W_out}).\\
\ENDFOR
\ENDWHILE
\ENDFOR
\end{algorithmic}
\end{algorithm}

During the training phase, each UAV aims at optimizing its output weight matrix $\boldsymbol{W}_{j\textrm{,} \mathrm{out}}$ such that the value of the error function $e_j(\boldsymbol{v}_j(t))$ at each stage $t$ is minimized. In particular, the training phase is composed of multiple iterations, each consisting of multiple rounds, i.e., the number of steps required for all UAVs to reach their corresponding destinations $d_j$. At each round, UAVs face a tradeoff between playing the action associated with the highest expected utility, and trying out all their actions
to improve their estimates of the reward function in (\ref{reward}). This in fact corresponds to the exploration and exploitation tradeoff, in which UAVs need to strike a balance between exploring their environment and exploiting the knowledge accumulated through such exploration~\cite{sutton}. Therefore, we adopt the $\epsilon$-greedy policy in which UAVs choose the action that yields the maximum utility value with a probability of $1- \epsilon + \frac{\epsilon}{\mid \mathcal{Z}_j\mid}$ while exploring randomly other actions with a probability of
$\frac{\epsilon}{\mid\mathcal{A}_j \mid}$. The strategy over the action space will be:
\vspace{-0.1cm}
\begin{align}
\pi_{j,z_j}(\boldsymbol{v}_j(t))=
\begin{cases}
1- \epsilon + \frac{\epsilon}{\mid \mathcal{Z}_j\mid}, \; \mathrm{argmax}_{z_j \in \mathcal{Z}_j} y_{j}\left(\boldsymbol{v}_j(t), \boldsymbol{z}_{j}(t) \right),\\
\frac{\epsilon}{\mid \mathcal{Z}_j\mid}, \; \mathrm{otherwise}.
\end{cases}
\end{align}

Based on the selected action $\boldsymbol{z}_j(t)$, each UAV $j$ updates its location, cell association, and transmission power level and computes its reward function according to (\ref{reward}). To determine the next network state, each UAV $j$ broadcasts its selected action to all other UAVs in the network. Then, each UAV $j$ updates its state transition vector $\boldsymbol{x}_j^{(n)}(t)$ for each layer $(n)$ of the deep ESN architecture according to (\ref{state_1}) and (\ref{state_n}). The output $y_j$ at stage $t$ is then updated based on (\ref{output}). Finally, the weights of the output matrix $\boldsymbol{W}_{j,\mathrm{out}}$ of each UAV $j$ are updated based on the linear gradient descent update rule given in (\ref{W_out}). Note that, a UAV stops taking any actions once it has reached its destination. A summary of the training phase is given in Algorithm~\ref{training_algorithm}.


\begin{algorithm}[t!] \scriptsize
\caption{Testing phase of the proposed deep RL algorithm}
\label{testing_algorithm}
\begin{algorithmic}[t!]
\vspace{0.2cm}
\WHILE{At least one UAV $j$ has not reached its destination $d_j$,}
\vspace{0.05cm}
\FOR{all UAVs $j$ (in a parallel fashion)}
\STATE \textbf{Input:} Each UAV $j$ receives an input $\boldsymbol{v}_j(t)$ based on (\ref{input}).
\STATE \textbf{Step 1: Action selection}\\ Each UAV $j$ selects an action $\boldsymbol{z}_j(t)= \mathrm{argmax}_{z_j \in \mathcal{Z}_j} y_{j}\left(\boldsymbol{v}_{j}(t), \boldsymbol{z}_{j}(t)\right)$.
\STATE \textbf{Step 2: Location, cell association and transmit power update}\\
Each UAV $j$ updates its location, cell association and transmission power level based on the selected action $\boldsymbol{z}_j(t)$.\\
\STATE \textbf{Step 3: Action broadcast}\\ Each UAV $j$ broadcasts its selected action $\boldsymbol{z}_j(t)$ to all other UAVs.\\
\STATE \textbf{Step 4: State transition vector update}\\
Each UAV $j$ updates the state transition vector $\boldsymbol{x}_j^{(n)}(t)$ for each layer $(n)$ of the deep ESN architecture based on (\ref{state_1}) and (\ref{state_n}).\\
\ENDFOR
\ENDWHILE
\end{algorithmic}
\end{algorithm}

Meanwhile, the testing phase corresponds to the actual execution of the algorithm. In this phase, each UAV chooses its action greedily for each state $\boldsymbol{v}_j(t)$, i.e., $\mathrm{argmax}_{z_j \in \mathcal{Z}_j} y_{j}(\boldsymbol{v}_j(t), \boldsymbol{z}_j(t))$, and updates its location, cell association, and transmission power level accordingly. Each UAV then broadcasts its selected action and updates its state transition vector $\boldsymbol{x}_j^{(n)}(t)$ for each layer $n$ of the deep ESN architecture based on (\ref{state_1}) and (\ref{state_n}). A summary of the testing phase is given in Algorithm \ref{testing_algorithm}.

It is important to note that analytically guaranteeing the convergence of the proposed deep learning algorithm is challenging as it is highly dependent on the hyperparameters used during the training phase. For instance, using too few neurons in the hidden layers results in underfitting which could make it hard for the neural network to detect the signals in a complicated data set. On the other hand, using too many neurons in the hidden layers can either result in overfitting or an increase in the training time that could prevent
the training of the neural network. Overfitting corresponds to the case when the model learns the random fluctuations and noise in the training data set to the extent that it negatively impacts the model's ability to generalize when fed with new data. Therefore, in this work, we limit our analysis of convergence by providing simulation results (see Section~\ref{simulation}) to show that, under a reasonable choice of the hyperparameters, convergence is observed for our proposed game. In such cases, it is important to study the convergence point and the convergence complexity of our proposed algorithm. Next, we characterize the convergence point of our proposed algorithm.



\begin{proposition}
\emph{If Algorithm~\ref{training_algorithm} converges, then the convergence strategy profile corresponds to a SPNE of game $\mathcal{G}$.}
\end{proposition}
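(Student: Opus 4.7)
The plan is to show that once the output weight matrices $\boldsymbol{W}_{j,\mathrm{out}}$ stabilize, the induced greedy strategy profile satisfies a Bellman-style optimality equation in every subgame, which by the definition of an SPNE yields the claim.

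First I would translate convergence of Algorithm~\ref{training_algorithm} into a fixed-point condition on the temporal-difference error. Setting the increment of the update rule~(\ref{W_out}) to zero forces $r_j(\boldsymbol{v}_j(t),\boldsymbol{z}_j(t),\boldsymbol{z}_{-j}(t)) - y_j(\boldsymbol{v}_j(t),\boldsymbol{z}_j(t)) = 0$ for every UAV $j$ and every state--action pair actually encountered during exploration. Because the $\epsilon$-greedy policy assigns strictly positive probability to every action at every reachable state, every $(\boldsymbol{v}_j,\boldsymbol{z}_j)$ on any realized subgame path is visited infinitely often, so the identity $y_j = r_j$ holds uniformly on the relevant portion of the state--action space.

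Next I would unfold the recursive definition of $r_j$ from~(\ref{reward}). At non-terminal states the convergence identity becomes
\[
y_j(\boldsymbol{v}_j(t),\boldsymbol{z}_j(t)) = \widetilde{u}_j(\boldsymbol{v}_j(t),\boldsymbol{z}_j(t),\boldsymbol{z}_{-j}(t)) + \gamma\max_{\boldsymbol{z}'_j\in\mathcal{Z}_j} y_j(\boldsymbol{v}_j(t+1),\boldsymbol{z}'_j),
\]
while at terminal states (when UAV $j$ reaches $d_j$) it reduces to the immediate expected utility. This identifies $y_j$ with the optimal action-value function of the single-agent MDP that UAV $j$ faces when the converged behavioral strategies $\boldsymbol{\pi}^*_{-j}$ of the other UAVs are held fixed. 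Given this characterization, I would close the argument by backward induction on the finite horizon $\mathcal{T}$, which is legitimate because $\mathcal{G}$ has perfect information and finitely many stages (the same feature invoked via Selten's theorem for existence). Define $\boldsymbol{\pi}^*_j(\boldsymbol{v}_j) = \arg\max_{\boldsymbol{z}_j\in\mathcal{Z}_j} y_j(\boldsymbol{v}_j,\boldsymbol{z}_j)$, breaking ties to preserve the behavioral-strategy form. The Bellman equation above ensures, at every information set of every subgame, that $\boldsymbol{\pi}^*_j$ weakly dominates any alternative behavioral strategy $\boldsymbol{\pi}_j\in\Delta(\mathcal{Z})$ against $\boldsymbol{\pi}^*_{-j}$, which is exactly the SPNE condition $\overline{u}_j(\boldsymbol{\pi}^*_j,\boldsymbol{\pi}^*_{-j})\ge \overline{u}_j(\boldsymbol{\pi}_j,\boldsymbol{\pi}^*_{-j})$.

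The hardest step will be justifying that stationarity of the stochastic gradient update~(\ref{W_out}) truly implies pointwise vanishing of the temporal-difference error, rather than merely its orthogonality to the reservoir feature direction. Because the target $r_j$ in~(\ref{reward}) is built from a bootstrapped $\max$ while $y_j$ is linear in the fixed reservoir feature vector $[\boldsymbol{v}_j,\boldsymbol{x}^{(1)}_j,\ldots,\boldsymbol{x}^{(n)}_j]$, a fixed point of the stochastic gradient step is a priori only a projection of the Bellman target onto the reservoir feature space. The cleanest way to close this gap is to appeal to the expressive capacity of the deep ESN, namely that the reservoir features combined with a trained $\boldsymbol{W}_{j,\mathrm{out}}$ can exactly realize the optimal $Q$-function on the finite joint state--action space; together with persistent $\epsilon$-greedy exploration, this validates the chain of equalities and yields the proposition.
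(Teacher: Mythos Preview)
Your argument is more detailed than the paper's and follows a different route. The paper's proof is essentially a two-step sketch: it first invokes Selten's theorem (via a lemma noting that $\mathcal{G}$ has bounded payoffs and finite players, states, and actions) to guarantee that an SPNE exists at all; it then observes that in finite dynamic games of perfect information any backward-induction solution is an SPNE, and asserts that because the reward in~(\ref{reward}) is a discounted sum maximized stage by stage, the converged strategy profile coincides with such a backward-induction solution. The paper does not explicitly derive a Bellman fixed-point condition from stationarity of~(\ref{W_out}), nor does it invoke the $\epsilon$-greedy exploration to argue coverage of the state--action space; it also does not address the function-approximation gap you flag. What your approach buys is an explicit mechanism linking ``$\boldsymbol{W}_{j,\mathrm{out}}$ stops moving'' to ``$y_j$ satisfies the optimal Bellman equation against $\boldsymbol{\pi}^*_{-j}$,'' which is precisely the step the paper leaves implicit. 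What the paper's approach buys is brevity: by packaging the dynamic-programming content into the single sentence ``backward induction in a finite perfect-information game yields an SPNE,'' it sidesteps the technical issues you raise about projection versus exact representation in the reservoir feature space, at the cost of not really resolving them.
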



\begin{proof}
An SPNE is a strategy profile that induces a Nash equilibrium on every subgame. Therefore, at the equilibrium state of each subgame, there is no incentive for any UAV to deviate after observing any history of joint actions. Moreover, given the fact that an ESN framework exhibits adaptive memory that enables it to store necessary previous state information,
UAVs can essentially retain other players' actions at each stage $t$ and thus take actions accordingly. To show that our proposed scheme guarantees convergence to an SPNE, we use the following lemma from~\cite{SPNE_existence}.

\begin{lemma}
For our proposed game $\mathcal{G}$, the payoff functions in (\ref{reward}) are bounded, and the number of players, state space and action space is finite. Therefore, $\mathcal{G}$ is a finite game and hence a SPNE exists. This follows from Selten's theorem which states that every finite extensive form game with perfect recall possesses an SPNE where the
players use behavioral strategies.
\end{lemma}

Here, it is important to note that for finite dynamic games of perfect information, any backward induction solution is a SPNE~\cite{walid_book}. Therefore, given the fact that, for our proposed game $\mathcal{G}$, each UAV aims at maximizing its expected sum of \emph{discounted rewards} at each stage $t$ as given in (\ref{reward}), one can guarantee that the convergence strategy profile corresponds to a SPNE of game $\mathcal{G}$. This completes the proof.
\end{proof}

Moreover, it is important to note that the convergence complexity of the proposed deep RL algorithm for reaching a SPNE is $O(J \times A^2)$. Next, we analyze the computational complexity of the proposed deep RL algorithm for practical scenarios in which the number of UAVs is relatively small.

\begin{theorem}\label{proposition_complexity}
\emph{For practical network scenarios, the computational complexity of the proposed training deep RL algorithm is $O(A^3)$ and reduces to $O(A^2)$ when considering a fixed altitude for the UAVs, where $A$ is the number of discretized unit areas.}
\end{theorem}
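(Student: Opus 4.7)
The plan is to decompose the per-round cost of Algorithm~\ref{training_algorithm} into its constituent operations, bound the number of rounds per training iteration using the path-length constraints from Section~\ref{system_model}, and then multiply by the number of training iterations required to sweep the effective state--action space. Because the reservoir dimensions $N_{j,R}^{(n)}$, the number of layers $N_{j,L}$, the action-tuple components (five movement directions, $O$ power levels, and the cell-association choice over $\mathcal{L}_j$), and the number of UAVs $J$ are all treated as fixed design parameters independent of $A$, the cost of each line of Algorithm~\ref{training_algorithm} can be absorbed into the constants of the big-$O$.

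First, I would show that one round of the training loop costs $O(1)$ in $A$. The action-selection step evaluates $\mathrm{argmax}$ over a constant-size $\mathcal{Z}_j$; the reward in (\ref{reward}) requires only constant-size SINR and delay computations; the state-transition updates (\ref{state_1}) and (\ref{state_n}) reduce to matrix-vector products whose dimensions depend only on $N_{j,R}^{(n)}$; and the gradient-descent update of $\boldsymbol{W}_{j,\mathrm{out}}$ in (\ref{W_out}) is again a vector operation of size $N_{j,U}+\sum_n N_{j,R}^{(n)}$. Summing across the fixed number of UAVs keeps the per-round cost at $O(1)$.

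Second, I would bound the number of rounds per iteration. Constraint (\ref{cons_1}) forces each area unit to be visited at most once along $\boldsymbol{p}_j$, so each UAV terminates its mission in at most $A$ steps; hence every training iteration contains $O(A)$ rounds. Third, I would bound the number of iterations required for the output weights to stabilize. Since the observed state (\ref{input}) is indexed by the UAV's position on the grid, training must repeatedly visit each location; using a standard argument that a constant number of sweeps of the effective position space suffices, the iteration count grows as $O(A)$ when the altitude is fixed (so positions lie on the $A$-point horizontal grid), and as $O(A^2)$ when the altitude is also treated as a learnable coordinate bounded by Theorem~\ref{theorem_altitude} (adding an extra grid dimension commensurate with the horizontal one). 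Multiplying rounds by iterations yields $O(A^2)$ and $O(A^3)$ respectively, which is the claim.

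The main obstacle is the third step: unlike the per-round cost, which is a routine count, the number of training iterations required to approximate the SPNE has no tight analytical characterization for a deep ESN. My plan is to handle it by invoking the convergence-complexity bound $O(J\times A^2)$ already asserted immediately before the theorem, treating $J$ as a constant in the practical-scenarios regime the theorem restricts to, and then arguing separately that lifting the altitude restriction enlarges the effective position index by one discretized dimension, producing the extra factor of $A$ in the variable-altitude case.
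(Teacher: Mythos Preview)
Your three-factor decomposition (per-round cost $\times$ rounds per iteration $\times$ training iterations) is a genuinely different route from the paper's. The paper does not count per-round arithmetic or episode length at all; it argues directly from the cardinality of the effective state space. It writes down the 3D state vector~(\ref{input_3D}), observes that for each state the action of UAV $j$ depends on the locations, power levels, and association vectors of the other UAVs, argues (informally, via a ``law of large numbers'' remark) that the location component dominates the other two, then invokes the practical-scenario assumption $J\ll A$ (citing the 3GPP admission-control policy) to absorb $J$ into the constant. The exponent is then read off from the number of spatial coordinates being updated: three coordinates give $O(A^3)$, two give $O(A^2)$. What your approach buys is a more transparent accounting of where the work actually sits inside Algorithm~\ref{training_algorithm}; what the paper's buys is that it never has to justify an iteration-count bound, since the entire claim reduces to ``state-space size $\approx$ training complexity.''

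There is also a concrete gap in your Step~2: constraint~(\ref{cons_1}) belongs to the feasible set of the centralized optimization problem~(\ref{obj})--(\ref{cons_9}), not to the $\epsilon$-greedy rollouts generated by Algorithm~\ref{training_algorithm}, so it does not by itself bound a training episode at $A$ rounds (an exploring UAV may revisit areas). The paper sidesteps this issue, and your acknowledged Step~3 difficulty, by never splitting the complexity into rounds and iterations in the first place. If you want to align with the paper, drop the per-round and per-episode bookkeeping and argue directly that the learner must cover a state set whose size scales with the number of positional coordinates being learned, using $J\ll A$ to eliminate the dependence on the number of UAVs.
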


\begin{proof}
Consider the case in which the UAVs can move with a fixed step size in a 3D space. For such scenarios, the state vector $\boldsymbol{v}'_j(t)$ is defined as:
\begin{align}\label{input_3D}
\boldsymbol{v}'_j(t)\textrm{=}\Big[\{\delta_{j\textrm{,}l\textrm{,}a}(t)\textrm{,} \theta_{j\textrm{,}l\textrm{,}a}(t)\}_{l=1}^{L_j}\textrm{,} \theta_{j\textrm{,}d_j\textrm{,}a}(t)\textrm{,} \{x_j(t)\textrm{,} y_j(t)\textrm{,} h_j(t)\}_{j \in \mathcal{J}} \Big]\textrm{,}
\end{align}

For each state $\boldsymbol{v}'_j(t)$, the action of UAV $j$ is a function of the location, transmission power level and cell association vector of all other UAVs in the network. Nevertheless, the number of possible locations of other UAVs in the network is much larger than the possible number of transmission power levels and the size of the cell association vector of those UAVs. Therefore, by the law of large numbers, one can consider the number of possible locations of other UAVs only when analyzing the convergence complexity of the proposed training algorithm. Moreover, for practical scenarios, the total number of UAVs in a given area is considered to be relatively small as compared to the number of discretized unit areas i.e., $J \ll A$ (3GPP admission control policy for cellular-connected UAVs~\cite{3GPP_standards}). Therefore, by the law of large numbers and given the fact that the UAVs take actions in a parallel fashion, the computational complexity of our proposed algorithm is $O(A^3)$ when the UAVs update their x, y and z coordinates and reduces to $O(A^2)$ when considering fixed altitudes for the UAVs. This completes the proof.
\end{proof}

From Theorem \ref{proposition_complexity}, we can conclude that the convergence speed of the proposed training algorithm is significantly reduced when considering a fixed altitude for the UAVs. This in essence is due to the reduction of the state space dimension when updating the $x$ and $y$ coordinates only. It is important to note here that there exists a tradeoff between the computational complexity of the proposed training algorithm and the resulting network performance. In essence, updating the 3D coordinates of the UAVs at each step $t$ allows the UAVs to better explore the space thus providing more opportunities for maximizing their corresponding utility functions. Therefore, from both Theorems~\ref{proposition_complexity} and~\ref{theorem_altitude}, the UAVs can update their x and y coordinates only during the learning phase while operating within the upper and lower altitude bounds derived in Theorem~\ref{theorem_altitude}.

\section{Simulation Results and Analysis}\label{simulation}

\begin{table}[t!] \scriptsize
\setlength{\belowcaptionskip}{0pt}
\setlength{\abovedisplayskip}{3pt}
\captionsetup{belowskip=0pt}
\newcommand{\tabincell}[2]{\begin{tabular}{@{}#1@{}}#1.6\end{tabular}}
 \setlength{\abovecaptionskip}{2pt}
 \renewcommand{\captionlabelfont}{\small}
\caption[table]{\scriptsize{\\SYSTEM PARAMETERS}}\label{parameters}
\centering
\tabcolsep=0.06cm 
\scalebox{0.99}{
\begin{tabular}{|c|c|c|c|}
\hline
\textbf{Parameters} & \textbf{Values} & \textbf{Parameters} & \textbf{Values} \\
\hline
UAV max transmit power $(\overline{P}_j)$ & 20 dBm & SINR threshold $(\overline{\Gamma}_j)$ & -3 dB \\
\hline
UE transmit power $(\widehat{P}_q)$ & 20 dBm & Learning rate $(\lambda_j)$ & 0.01\\
\hline
Noise power spectral density $(N_0)$ & -174 dBm/Hz & RB bandwidth $(B_c)$& 180 kHz\\
\hline
Total bandwidth $(B)$ & 20 MHz & \# of interferers $(L)$ & 2\\
\hline
Packet arrival rate $(\lambda_{j,s})$ & (0,1) & Packet size $(\nu)$ & 2000 bits\\
\hline
Carrier frequency $(\hat{f})$ & 2 GHz & Discount factor $(\gamma)$ & 0.7 \\
\hline
\# of hidden layers & 2 & Step size $(\widetilde{a}_j)$ & 40 m \\
\hline
Leaky parameter/layer $(\omega_j^{(n)})$ & 0.99, 0.99 & $\epsilon$ & 0.3\\
\hline
\end{tabular}
}
\vspace{-0.24cm}
\end{table}

For our simulations, we consider an 800 m $\times$ 800 m square area divided into 40 m $\times$ 40 m grid areas, in which we randomly uniformly deploy 15 BSs. All statistical
results are averaged over several independent testing iterations during which the initial locations and destinations of the UAVs and the locations of the BSs and the ground UEs are randomized. The maximum transmit power for each UAV is discretized into 5 equally separated levels. We consider an uncorrelated Rician fading channel with parameter $\widehat{K}=1.59$~\cite{rician_fading}. The external input of the deep ESN architecture, $\boldsymbol{v}_j(t)$, is a function of the number of UAVs and thus the number of hidden nodes per layer, $N_{j,R}^{(n)}$, varies with the number of UAVs. For instance, $N_{j,R}^{(n)}= 12$ and $6$ for $n=1$ and $2$, respectively, for a network size of 1 and 2 UAVs, and 20 and 10 for a network size of 3, 4, and 5 UAVs. Table~\ref{parameters} summarizes the main simulation parameters.

\begin{figure}[!t]
\begin{subfigure}{1.0\textwidth}
  \centering
  \includegraphics[width=10cm]{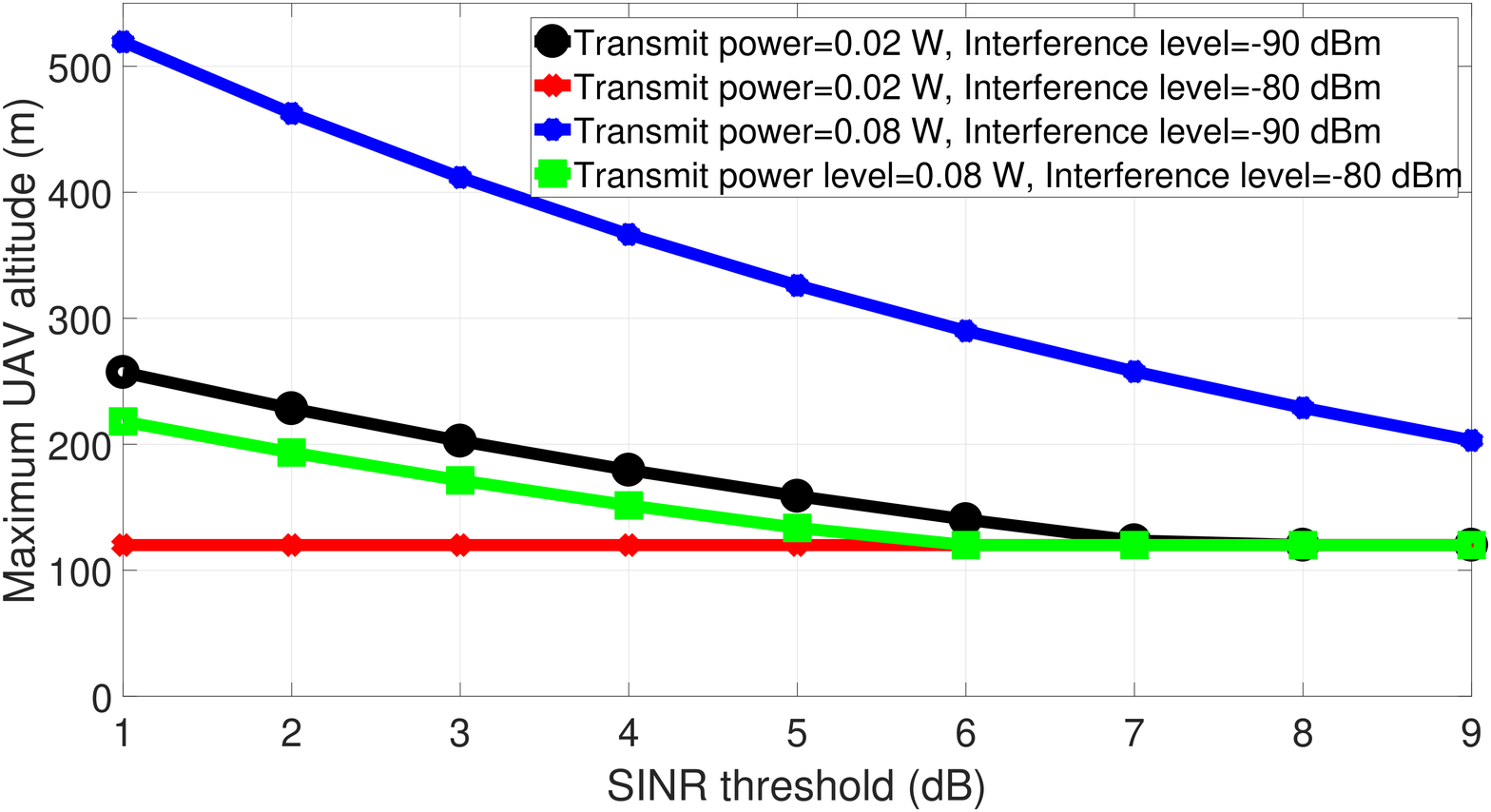}
  \caption{}
  \label{maximum_altitude}
\end{subfigure}\\
\begin{subfigure}{1.0\textwidth}
  \centering
  \includegraphics[width=10cm]{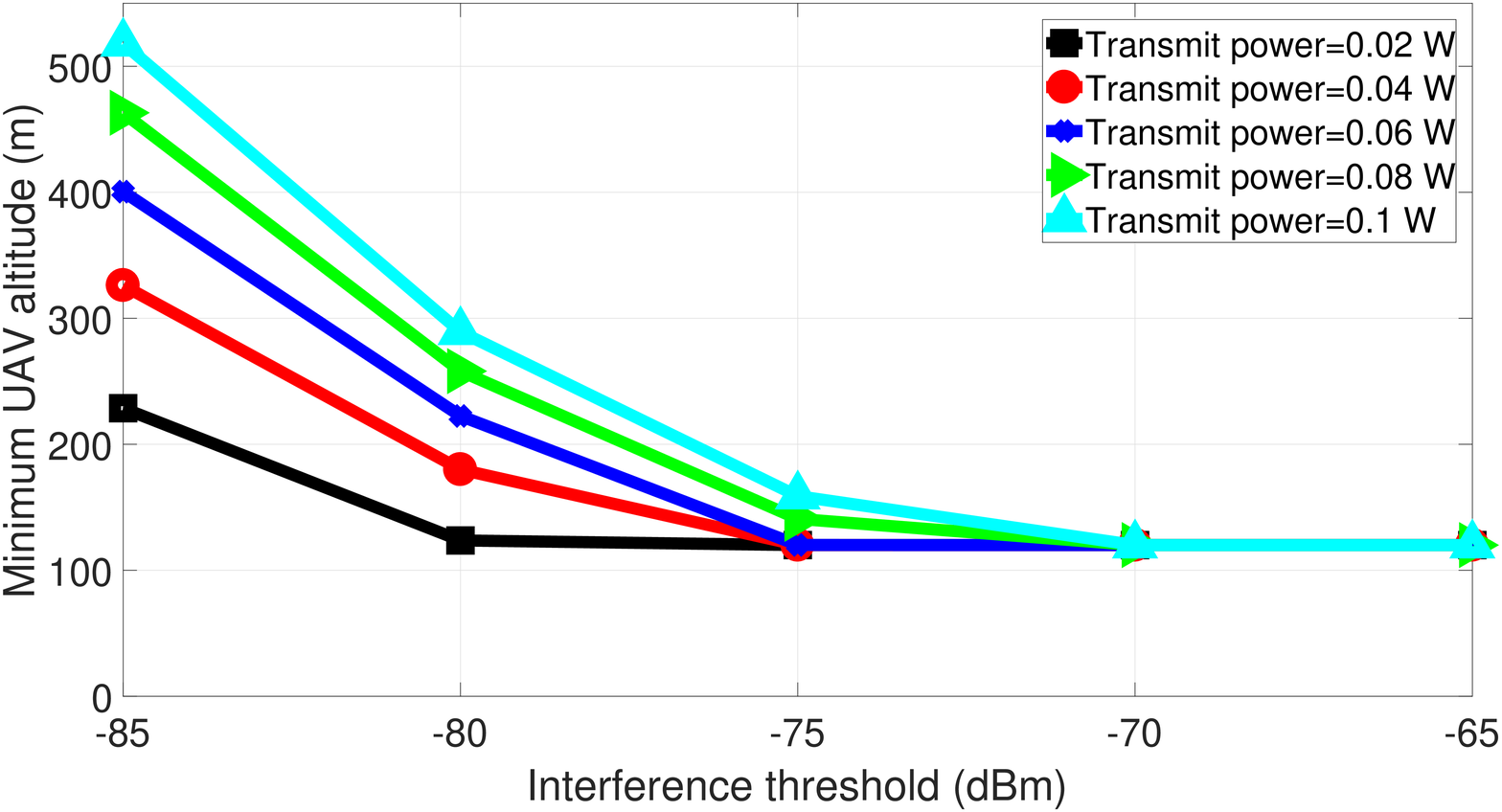}
  \caption{}
  \label{minimum_altitude}
\end{subfigure}
\vspace{-0.3cm}
\caption{The (a) upper bound for the optimal altitude of the UAVs as a function of the SINR threshold value $(\bar{\Gamma})$ and for different transmit power levels and ground network density and (b) lower bound for the optimal altitude of the UAVs as a function of the interference threshold value $(\sum_{c=1}^{C_{j,s}(t)} \bar{I}_{j,r,c,a})$ and for different transmit power levels.}\label{altitude_results}
\vspace{-0.2cm}
\end{figure}

Fig.~\ref{maximum_altitude} shows the upper bound for the optimal altitude of UAV $j$ as a function of the SINR threshold value, $\bar{\Gamma}$, and for different transmit power levels, based on Theorem~\ref{theorem_altitude}. On the other hand, Fig.~\ref{minimum_altitude} shows the lower bound for the optimal altitude of UAV $j$ as a function of the SINR threshold value, $\bar{\Gamma}$, and for different transmit power levels and ground network density, based on Theorem~\ref{theorem_altitude}. From Figs.~\ref{maximum_altitude} and~\ref{minimum_altitude}, we can deduce that the optimal altitude range of a given UAV is a function of network design parameters, ground network data requirements, the density of the ground network, and its action $\boldsymbol{v}_j(t)$. For instance, the upper bound on the UAV's optimal altitude decreases as $\bar{\Gamma}$ increases while its lower bound decreases as $\sum_{c=1}^{C_{j,s}(t)} \bar{I}_{j,r,c,a}$ increases. Moreover, the maximum altitude of the UAV decreases as the ground network gets denser while the its lower bound increases as the ground network data requirements increase. Thus, in such scenarios, a UAV should operate at higher altitudes. A UAV should also operate at higher altitudes when its transmit power level increases due to the increase in the lower and upper bounds of its optimal altitude.

\vspace{-0.1cm}
\begin{figure}[t!]
  \begin{center}
  \centering
   \includegraphics[width=11cm]{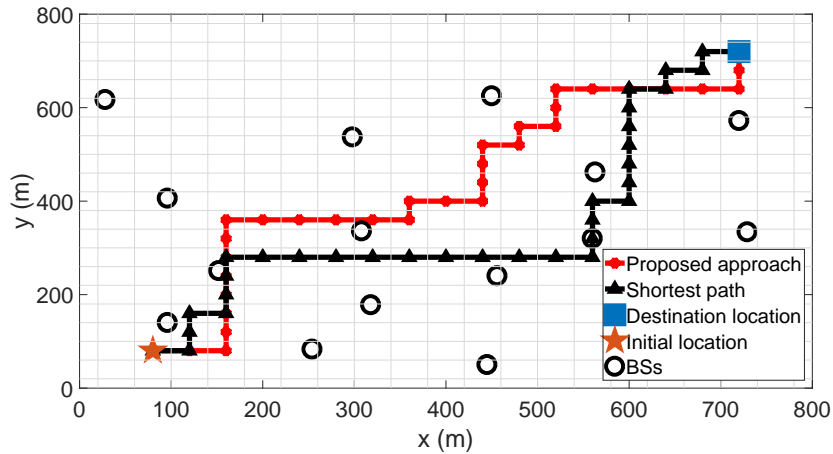} 
   \vspace{-0.4cm}
   \caption{Path of a UAV for our approach and shortest path scheme.}\label{snapshot}
    \vspace{-0.5cm}
  \end{center}
\end{figure}

\begin{table}[t!]\footnotesize
\setlength{\belowcaptionskip}{0pt}
\setlength{\abovedisplayskip}{3pt}
\captionsetup{belowskip=0pt}
\newcommand{\tabincell}[2]{\begin{tabular}{@{}#1@{}}#1.6\end{tabular}}
 \setlength{\abovecaptionskip}{2pt}
 \renewcommand{\captionlabelfont}{\small}
\caption[table]{\scriptsize{\\Performance assessment for one UAV}}\label{snapshot_table}
\centering
\tabcolsep=0.1cm 
\scalebox{0.99}{
\begin{tabular}{|c|c|c|c|}
\hline
 & \# of steps & delay (ms) & average rate per UE (Mbps) \\
\hline
Proposed approach & 32 & 6.5 & 0.95\\
\hline
Shortest path & 32 & 12.2 & 0.76\\
\hline
\end{tabular}
}
\vspace{-0.24cm}
\end{table}

Fig.~\ref{snapshot} shows a snapshot of the path of a single UAV resulting from our approach and from a shortest path scheme. Unlike our proposed scheme which accounts for other wireless metrics during path planning, the objective of the UAVs in the shortest path scheme is to reach their destinations with the minimum number of steps. Table~\ref{snapshot_table} presents the performance results for the paths shown in Fig.~\ref{snapshot}. From Fig.~\ref{snapshot}, we can see that, for our proposed approach, the UAV selects a path away from the densely deployed area while maintaining proximity to its serving BS in a way that would minimize the steps required to reach its destination. This path will minimize the interference level that the UAV causes on the ground UEs and its wireless latency (Table~\ref{snapshot_table}). From Table~\ref{snapshot_table}, we can see that our proposed approach achieves 25\% increase in the average rate per ground UE and 47\% decrease in the wireless latency as compared to the shortest path, while requiring the same number of steps that the UAV needs to reach the destination.

\begin{figure}[t!]
  \begin{center}
  \centering
   \includegraphics[width=11cm, scale=1.9]{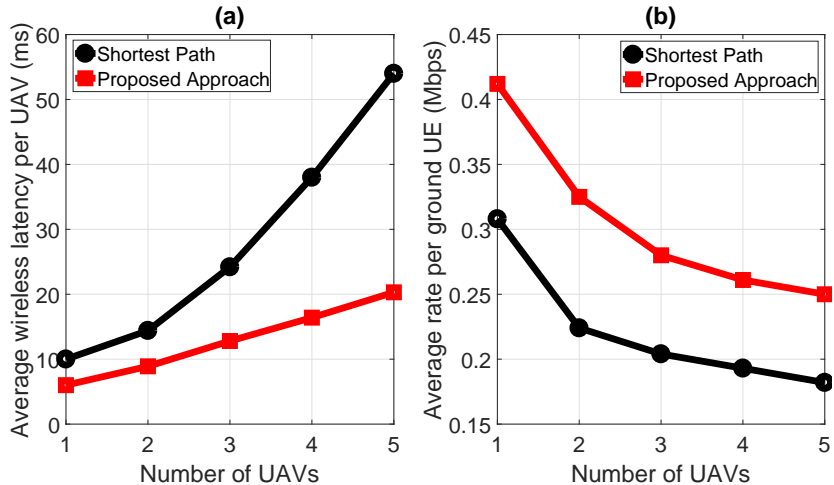} 
   \vspace{-0.4cm}
   \caption{Performance assessment of the proposed approach in terms of average (a) wireless latency per UAV and (b) rate per ground UE as compared to the shortest path approach, for different number of UAVs.}\label{scalability}
    \vspace{-0.5cm}
  \end{center}
\end{figure}

\begin{table}[t!]\footnotesize
\setlength{\belowcaptionskip}{0pt}
\setlength{\abovedisplayskip}{3pt}
\captionsetup{belowskip=0pt}
\newcommand{\tabincell}[2]{\begin{tabular}{@{}#1@{}}#1.6\end{tabular}}
 \setlength{\abovecaptionskip}{2pt}
 \renewcommand{\captionlabelfont}{\small}
 \captionsetup{justification=centering}
\caption[table]{\scriptsize{\\The required number of steps for all UAVs to reach their corresponding destinations based on our proposed approach and that of the shortest path scheme for different number of UAVs}}\label{steps_table}
\centering
\tabcolsep=0.1cm 
\scalebox{0.99}{
\begin{tabular}{|c|c|c|c|c|c|}
\hline
\# of steps & 1 UAV & 2 UAVs & 3 UAVs & 4 UAVs & 5 UAVs\\
\hline
Proposed approach & 4 & 4 & 6 & 7 & 8\\
\hline
Shortest path & 4 & 4 & 6 & 6 & 7\\
\hline
\end{tabular}
}
\vspace{-0.24cm}
\end{table}

Fig.~\ref{scalability} compares the average values of the (a) wireless latency per UAV and (b) rate per ground UE resulting from our proposed approach and the baseline shortest path scheme. Moreover, Table~\ref{steps_table} compares the number of steps required by all UAVs to reach their corresponding destinations for the scenarios presented in Fig.~\ref{scalability}. From Fig.~\ref{scalability} and Table~\ref{steps_table}, we can see that, compared to the shortest path scheme, our approach achieves a lower wireless latency per UAV and a higher rate per ground UE for different numbers of UAVs while requiring a number of steps that is comparable to the baseline. In fact, our scheme provides a better tradeoff between energy efficiency, wireless latency, and ground UE data rate compared to the shortest path scheme. For instance, for 5 UAVs, our scheme achieves a 37\% increase in the average achievable rate per ground UE, 62\% decrease in the average wireless latency per UAV, and 14\% increase in energy efficiency. Indeed, one can adjust the multi-objective weights of our utility function based on several parameters such as the rate requirements of the ground network, the power limitation of the UAVs, and the maximum tolerable wireless latency of the UAVs. Moreover, Fig.~\ref{scalability} shows that, as the number of UAVs increases, the average delay per UAV increases and the average rate per ground UE decreases, for all schemes. This is due to the increase in the interference level on the ground UEs and other UAVs as a result of the LoS link between the UAVs and the BSs.

\begin{figure}[t!]
  \begin{center}
  \centering
   \includegraphics[width=11cm, scale=1.9]{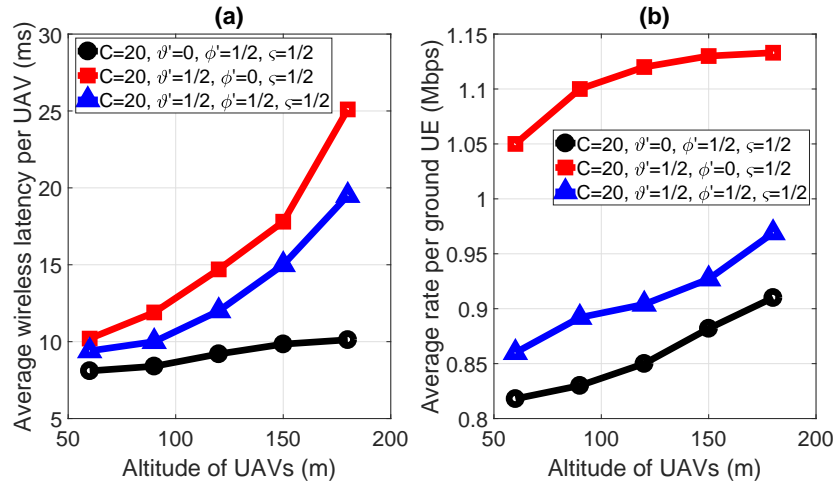} 
   \vspace{-0.4cm}
   \caption{Performance assessment of the proposed approach in terms of average (a) wireless latency per UAV and (b) rate per ground UE for different utility functions and for different altitudes of the UAVs.}\label{altitude}
    \vspace{-0.9cm}
  \end{center}
\end{figure}

Fig.~\ref{altitude} studies the effect of the UAVs' altitude on the average values of the (a) wireless latency per UAV and (b) rate per ground UE for different utility functions. From Fig.~\ref{altitude}, we can see that, as the altitude of the UAVs increases, the average wireless latency per UAV increases for all studied utility functions. This is mainly due to the increase in the distance of the UAVs from their corresponding serving BSs which accentuates the path loss effect. Moreover, higher UAV altitudes result in a higher average data rate per ground UE for all studied utility functions mainly due to the decrease in the interference level that is caused from the UAVs on neighboring BSs. Here, there exists a tradeoff between minimizing the average wireless delay per UAV and maximizing the average data rate per ground UE. Therefore, alongside the multiobjective weights, the altitude of the UAVs can be varied such that the ground UE rate requirements is met while minimizing the wireless latency for each UAV based on its mission objective.

\begin{figure}[t!]
  \begin{center}
  \centering
   \includegraphics[width=11cm, scale=1.9]{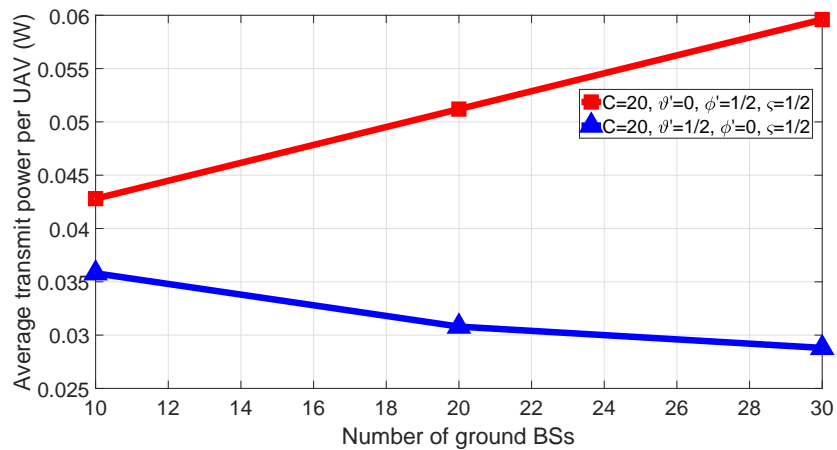} 
   \vspace{-0.4cm}
   \caption{Effect of the ground network densification on the average transmit power level of the UAVs along their paths.}\label{power_densification}
    \vspace{-0.9cm}
  \end{center}
\end{figure}

Fig.~\ref{power_densification} shows the average transmit power level per UAV along its path as a function of the number of BSs considering two utility functions, one for minimizing the average wireless latency for each UAV and the other for minimizing the interference level on the ground UEs. From Fig.~\ref{power_densification}, we can see that network densification has an impact on the transmission power level of the UAVs. For instance, when minimizing the wireless latency of each UAV along its path, the average transmit power level per UAV increases from 0.04 W to 0.06 W as the number of ground BSs increases from 10 to 30, respectively. In essence, the increase in the transmit power level is the result of the increase in the interference level from the ground UEs as the ground network becomes denser. As a result, the UAVs will transmit using a larger transmission power level so as to minimize their wireless transmission delay. On the other hand, the average transmit power level per UAV decreases from 0.036 W to 0.029 W in the case of minimizing the interference level caused on neighboring BSs. This is due to the fact that as the number of BSs increases, the interference level caused by each UAV on the ground network increases thus requiring each UAV to decrease its transmit power level. Note that, when minimizing the wireless latency, the average transmit power per UAV is always larger than the case of minimizing the interference level, irrespective of the number of ground BSs. Therefore, the transmit power level of the UAVs is a function of their mission objective and the number of ground BSs.

\begin{figure}[t!]
  \begin{center}
  \centering
   \includegraphics[width=11cm, scale=1.9]{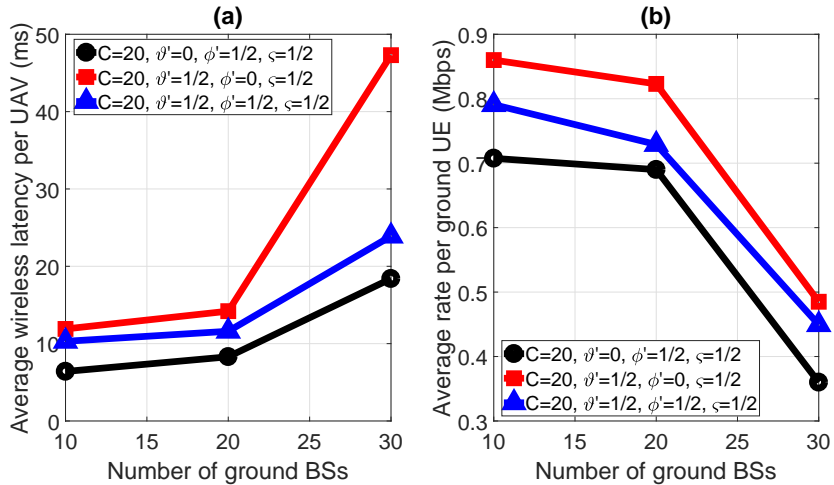} 
   \vspace{-0.4cm}
   \caption{Effect of the ground network densification on the average (a) wireless latency per UAV and (b) rate per ground UE for different utility functions and for a fixed altitude of 120m.}\label{densification}
    \vspace{-0.7cm}
  \end{center}
\end{figure}

Fig.~\ref{densification} presents the (a) wireless latency per UAV and (b) rate per ground UE for different utilities as a function of the number of BSs and for a fixed altitude of 120 m. From this figure, we can see that, as the ground network becomes more dense, the average wireless latency per UAV increases and the average rate per ground UE decreases for all considered cases. For instance, when the objective is to minimize the interference level along with energy efficiency, the average wireless latency per UAV increases from 13 ms to 47 ms and the average rate per ground UE decreases from 0.86 Mbps to 0.48 Mbps as the number of BSs increases from 10 to 30. This is due to the fact that a denser network results in higher interference on the UAVs as well as other UEs in the network.

\begin{figure}[t!]
  \begin{center}
  \centering
   \includegraphics[width=11cm, scale=1.9]{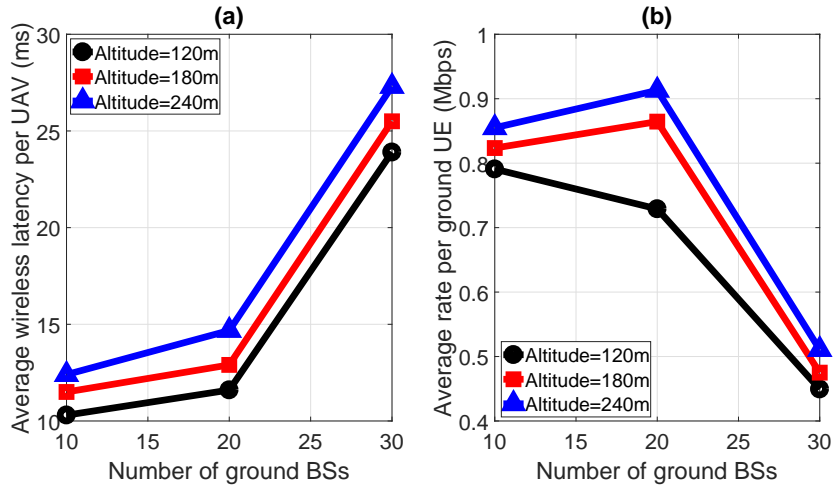} 
   \vspace{-0.4cm}
   \caption{Effect of the ground network densification on the average (a) wireless latency per UAV and (b) rate per ground UE for different utility functions and for various altitudes of the UAVs.}\label{densification_altitude}
    \vspace{-0.7cm}
  \end{center}
\end{figure}

Fig.~\ref{densification_altitude} investigates the (a) wireless latency per UAV and (b) rate per ground UE for different values of the UAVs altitude and as a function of the number of BSs. From this figure, we can see that as the UAV altitude increases and/or the ground network becomes denser, the average wireless latency per UAV increases. For instance, the delay increases by 27\% as the altitude of the UAVs increases from 120 to 240 m for a network consisting of 20 BSs and increases by 120\% as the number of BSs increases from 10 to 30 for a fixed altitude of 180 m. This essentially follows from Theorem~\ref{theorem_altitude} and the results in Fig.~\ref{maximum_altitude} which shows that the maximum altitude of the UAV decreases as the ground network gets denser and thus the UAVs should operate at a lower altitude when the number of BSs increases from 10 to 30. Moreover, the average rate per ground UE decreases as the ground network becomes denser due to the increase in the interference level and increases as the altitude of the UAVs increases. Therefore, the resulting network performance depends highly on both the UAVs altitude and the number of BSs in the network. For instance, in case of a dense ground network, the UAVs need to fly at a lower altitude for applications in which the wireless transmission latency is more critical and at a higher altitude in scenarios in which a minimum achievable data rate for the ground UEs is required.

\begin{figure}[t!]
  \begin{center}
  \centering
   \includegraphics[width=11cm, scale=1.9]{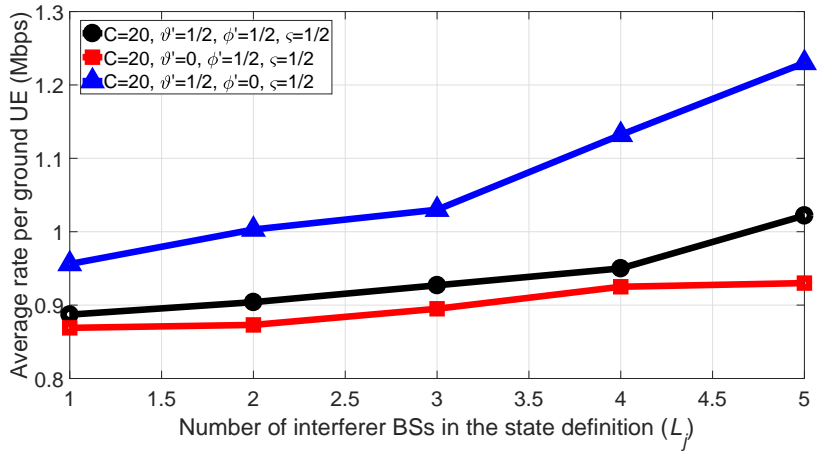} 
   \vspace{-0.4cm}
   \caption{The average rate per ground UE as a function of the number of interferer BSs in the state definition $(\emph{L}_j)$.}\label{interferers}
    \vspace{-0.9cm}
  \end{center}
\end{figure}

Fig.~\ref{interferers} shows the effect of varying the number of nearest BSs ($\emph{L}_j$) in the observed network state of UAV $j$, $\boldsymbol{v}_j(t)$, on the average data rate per ground UE for different utility functions. From Fig.~\ref{interferers}, we can see an improvement in the average rate per ground UE as the number of nearest BSs in the state definition increases. For instance, in scenarios in which the UAVs aim at minimizing the interference level they cause on the ground network along their paths, the average rate per ground UE increases by 28\% as the number of BSs in the state definition increases from 1 to 5. This gain results from the fact that as $\emph{L}_j$ increases, the UAVs get a better sense of their surrounding environment and thus can better select their next location such that the interference level they cause on the ground network is minimized. It is important to note here, that as $\emph{L}_j$ increases, the size of the external input ($\boldsymbol{v}_j$) increases thus requiring a larger number of neurons in each layer. This in turn increases the number of required iterations for convergence. Therefore, a tradeoff exists between improving the performance of the ground UEs and the running complexity of the proposed algorithm.


\begin{figure}[t!]
  \begin{center}
  \centering
   \includegraphics[width=11cm, scale=1.9]{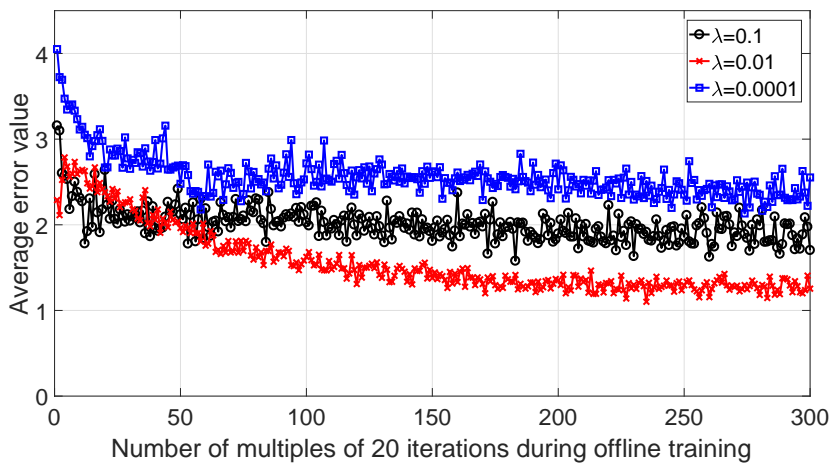} 
   \vspace{-0.31cm}
   \caption{Effect of the learning rate on the convergence of offline training.}\label{learning_rate}
    \vspace{-0.9cm}
  \end{center}
\end{figure}

Fig.~\ref{learning_rate} shows the average of the error function $e_j(\boldsymbol{v}_j(t))$ resulting from the offline training phase as a function of a multiple of 20 iterations while considering different values for the learning rate, $\lambda$. The learning rate determines the step size the algorithm takes to reach the optimal solution and, thus, it impacts the convergence rate of our proposed framework. From Fig.~\ref{learning_rate}, we can see that small values of the learning rate, i.e., $\lambda =0.0001$, result in a slow speed of convergence. On the other hand, for large values of the learning rate, such as $\lambda=0.1$, the error function decays fast for the first few iterations but then remains constant. Here, $\lambda=0.1$ does not lead to convergence during the testing phase, but $\lambda =0.0001$ and $\lambda=0.01$ result in convergence, though requiring a different number of training iterations. In fact, a large learning rate can cause the algorithm to diverge from the optimal solution. This is because large initial learning rates will decay the loss function faster and thus make the model get stuck at a particular region of the optimization space instead of better exploring it. Clearly, our framework achieves better performance for $\lambda=0.01$, as compared to smaller and larger values of the learning rate. We also note that the error function does not reach the value of zero during the training phase. This is due to the fact that, for our approach, we adopt the early stopping technique to avoid overfitting which occurs when the training error decreases at the expense of an increase in the value of the test error~\cite{RNN_survey}.


\section{Conclusion}\label{conclusion}
\vspace{-0.1cm}
In this paper, we have proposed a novel interference-aware path planning scheme that allows cellular-connected UAVs to minimize the interference they cause on a ground network as well as their wireless transmission latency while transmitting online mission-related data. We have formulated the problem as a noncooperative game in which the UAVs are the players. To solve the game, we have proposed a deep RL algorithm based ESN cells which is guaranteed to reach an SPNE, if it converges. The proposed algorithm enables each UAV to decide on its next location, transmission power level, and cell association vector in an autonomous manner thus adapting to the changes in the network. Simulation results have shown that the proposed approach achieves better wireless latency per UAV and rate per ground UE while requiring a number of steps that is comparable to the shortest path scheme. The results have also shown that a UAV's altitude plays a vital role in minimizing the interference level on the ground UEs as well as the wireless transmission delay of the UAV. In particular, we have shown that the altitude of the UAV is a function of the ground network density, the UAV's objective and the actions of other UAVs in the network.

\section*{Appendix}
\subsection{Proof of Theorem \ref{theorem_altitude}}
For a given network state $\boldsymbol{v}_j(t)$ and a particular action $\boldsymbol{z}_j(t)$, the upper bound for the altitude of UAV $j$ can be derived when UAV $j$ aims at minimizing its delay function only, i.e., $\vartheta '=0$. For such scenarios, UAV $j$ should guarantee an upper limit, $\overline{\Gamma}_j$, for the SINR value $\Gamma_{j,s,c,a}$ of the transmission link from UAV $j$ to BS $s$ on RB $c$ at location $a$ as given in constraint (\ref{cons_7}). Therefore, $\hat{h}_j^{\mathrm{max}}(\boldsymbol{v}_j(t)\textrm{,} \boldsymbol{z}_j(t)\textrm{,} \boldsymbol{z}_{-j}(t))$ corresponds to the altitude at which UAV $j$ achieves $\overline{\Gamma}_j$ and beyond which (\ref{cons_7}) is violated. The derivation of the expression of $\hat{h}_j^{\mathrm{max}}(\boldsymbol{v}_j(t)\textrm{,} \boldsymbol{z}_j(t)\textrm{,} \boldsymbol{z}_{-j}(t))$ is:
\begin{align}
\sum_{c=1}^{C_{j,s}(t)}\Gamma_{j,s,c,a} = \overline{\Gamma}_j,
\end{align}
\begin{align}
\sum_{c=1}^{C_{j,s}(t)} \frac{\frac{\widehat{P}_{j,s,a}(\boldsymbol{v}_j(t))}{C_{j,s}(t)}\cdot g_{j,s,c,a}(t)}{\left(\frac{4 \pi \hat{f} d_{j,s,a}^{\mathrm{max}}}{\hat{c}}\right)^2 \cdot (I_{j,s,c}(t)+B_cN_0)}= \overline{\Gamma}_j,
\end{align}
\begin{align}
\frac{\widehat{P}_{j,s,a}(\boldsymbol{v}_j(t))}{C_{j,s}(t)} \cdot \frac{1}{\left(\frac{4 \pi \hat{f} d_{j,s,a}^{\mathrm{max}}}{\hat{c}}\right)^2} \cdot \sum_{c=1}^{C_{j,s}(t)} \frac{g_{j,s,c,a}{}(t)}{I_{j,s,c}(t)+B_cN_0} = \overline{\Gamma}_j,
\end{align}
\begin{align}
(d_{j,s,a}^{\mathrm{max}})^2=\frac{\widehat{P}_{j,s,a}(\boldsymbol{v}_j(t))}{C_{j,s}(t)} \cdot \frac{1}{\overline{\Gamma}_j \cdot \left(\frac{4 \pi \hat{f}}{\hat{c}}\right)^2} \cdot \sum_{c=1}^{C_{j,s}(t)}\frac{g_{j,s,c,a}(t)}{I_{j,s,c}(t)+B_cN_0},
\end{align}
\noindent where $d_{j,s,a}$ is the Euclidean distance between UAV $j$ and its serving BS $s$ at location $a$. Assume that the altitude of BS $s$ is negligible, i.e., $z_s=0$, $\hat{h}_j^{\mathrm{max}}(\boldsymbol{v}_j(t)\textrm{,} \boldsymbol{z}_j(t)\textrm{,} \boldsymbol{z}_{-j}(t))$ can be expressed as:
\begin{multline}
\hat{h}_j^{\mathrm{max}}(\boldsymbol{v}_j(t)\textrm{,} \boldsymbol{z}_j(t)\textrm{,} \boldsymbol{z}_{-j}(t))= \\ \sqrt{\frac{\widehat{P}_{j,s,a}(\boldsymbol{v}_j(t))}{C_{j,s}(t) \cdot \overline{\Gamma}_j \cdot \left(\frac{4 \pi \hat{f}}{\hat{c}}\right)^2} \cdot \sum_{c=1}^{C_{j,s}(t)}\frac{g_{j,s,c,a}(t)}{I_{j,s,c}(t)+B_cN_0} - (x_j - x_s)^2 - (y_j - y_s)^2},
\end{multline}

\noindent where $x_s$ and $y_s$ correspond to the x and y coordinates of the serving BS $s$ and $\hat{c}$ is the speed of light.

On the other hand, for a given network state $\boldsymbol{v}_j(t)$ and a particular action $\boldsymbol{z}_j(t)$, the lower bound for the altitude of UAV $j$ can be derived when the objective function of UAV $j$ is to minimize the interference level it causes on the ground network only, i.e., $\phi '=0$ and $\varsigma=0$. For such scenarios, the interference level that UAV $j$ causes on neighboring BS $r$ at location $a$ should not exceed a predefined value given by $\sum_{c=1}^{C_{j,s}(t)}\bar{I}_{j,r,c,a}$\footnote{$\sum_{c=1}^{C_{j,s}(t)}\bar{I}_{j,r,c,a}$ is a network design parameter that is a function of the ground network density, number of UAVs in the network and the data rate requirements of the ground UEs. The value of $\bar{I}_{j,r,c,a}$ is in fact part of the admission control policy which limits the number of UAVs in the network and their corresponding interference level on the ground network~\cite{3GPP_standards}.}. Therefore, $\hat{h}_j^{\mathrm{min}}(\boldsymbol{v}_j(t)\textrm{,} \boldsymbol{z}_j(t)\textrm{,} \boldsymbol{z}_{-j}(t))$ corresponds to the altitude at which UAV $j$ achieves $\sum_{c=1}^{C_{j,s}(t)}\bar{I}_{j,r,c,a}$ and below which the level of interference it causes on BS $r$ exceeds the value of $\sum_{c=1}^{C_{j,s}(t)}\bar{I}_{j,r,c,a}$. The derivation of the expression of $\hat{h}_j^{\mathrm{min}}(\boldsymbol{v}_j(t)\textrm{,} \boldsymbol{z}_j(t)\textrm{,} \boldsymbol{z}_{-j}(t))$ is given by:
\begin{align}
\sum_{c=1}^{C_{j,s}(t)}\sum_{r=1, r\neq s}^{S} \frac{\widehat{P}_{j,s,a}(\boldsymbol{v}_j(t)) h_{j,r,c,a}(t)}{C_{j,s}(t)}= \sum_{c=1}^{C_{j,s}(t)} \sum_{r=1, r\neq s}^{S}\bar{I}_{j,r,c,a},
\end{align}
\begin{align}\label{all_interferers}
\sum_{c=1}^{C_{j,s}(t)}\sum_{r=1, r\neq s}^{S} \frac{\widehat{P}_{j,s,a}(\boldsymbol{v}_j(t)) \cdot g_{j,r,c,a}(t)}{C_{j,s}(t) \cdot \left(\frac{4 \pi \hat{f} d_{j,r,a}^{\mathrm{min}}}{\hat{c}}\right)^2 }= \sum_{c=1}^{C_{j,s}(t)} \sum_{r=1, r\neq s}^{S}\bar{I}_{j,r,c,a},
\end{align}
To find $\hat{h}_j^{\mathrm{min}}(\boldsymbol{v}_j(t)\textrm{,} \boldsymbol{z}_j(t)\textrm{,} \boldsymbol{z}_{-j}(t))$, we need to solve (\ref{all_interferers}) for each neighboring BS $r$ separately. Therefore, for a particular neighboring BS $r$, (\ref{all_interferers}) can be written as:
\begin{align}
\sum_{c=1}^{C_{j,s}(t)} \frac{\widehat{P}_{j,s,a}(\boldsymbol{v}_j(t)) \cdot g_{j,r,c,a}(t)}{C_{j,s}(t) \cdot \left(\frac{4 \pi \hat{f} d_{j,r,a}^{\mathrm{min}}}{\hat{c}}\right)^2}= \sum_{c=1}^{C_{j,s}(t)} \bar{I}_{j,r,c,a},
\end{align}
\begin{align}
\frac{\widehat{P}_{j,s,a}(\boldsymbol{v}_j(t)) \cdot \sum_{c=1}^{C_{j,s}(t)} g_{j,r,c,a}(t)}{C_{j,s}(t) \cdot \left(\frac{4 \pi \hat{f} d_{j,r,a}^{\mathrm{min}}}{\hat{c}}\right)^2} = \sum_{c=1}^{C_{j,s}(t)} \bar{I}_{j,r,c,a},
\end{align}
\begin{align}
(d_{j,r,a}^{\mathrm{min}})^2=\frac{\widehat{P}_{j,s,a}(\boldsymbol{v}_j(t)) \cdot \sum_{c=1}^{C_{j,s}(t)} g_{j,r,c,a}(t)}{C_{j,s}(t) \cdot \left(\frac{4 \pi \hat{f}}{\hat{c}}\right)^2 \cdot  \sum_{c=1}^{C_{j,s}(t)} \bar{I}_{j,r,c,a}},
\end{align}
\noindent where $d_{j,r,a}$ is the Euclidean distance between UAV $j$ and its neighboring BS $r$ at location $a$. Assume that the altitude of BS $r$ is negligible, i.e., $z_r=0$, we have:
\begin{align}
\hat{h}_{j,r}^{\mathrm{min}}(\boldsymbol{v}_j(t)\textrm{,} \boldsymbol{z}_j(t)\textrm{,} \boldsymbol{z}_{-j}(t))= \sqrt{\frac{\widehat{P}_{j,s,a}(\boldsymbol{v}_j(t)) \cdot \sum_{c=1}^{C_{j,s}(t)} g_{j,r,c,a}(t)}{C_{j,s}(t) \cdot \left(\frac{4 \pi \hat{f}}{\hat{c}}\right)^2 \cdot  \sum_{c=1}^{C_{j,s}(t)} \bar{I}_{j,r,c,a}} - (x_j - x_r)^2 - (y_j - y_r)^2},
\end{align}
Therefore, $\hat{h}_{j}^{\mathrm{min}}(\boldsymbol{v}_j(t)\textrm{,} \boldsymbol{z}_j(t)\textrm{,} \boldsymbol{z}_{-j}(t))$ corresponds to the maximum value of $\hat{h}_{j,r}^{\mathrm{min}}(\boldsymbol{v}_j(t)\textrm{,} \boldsymbol{z}_j(t)\textrm{,} \boldsymbol{z}_{-j}(t))$ among all neighboring BSs $r$ and is expressed as:
\begin{align}
\hat{h}_j^{\mathrm{min}}(\boldsymbol{v}_j(t)\textrm{,} \boldsymbol{z}_j(t)\textrm{,} \boldsymbol{z}_{-j}(t))= \max_r \hat{h}_{j,r}^{\mathrm{min}}(\boldsymbol{v}_j(t)\textrm{,} \boldsymbol{z}_j(t)\textrm{,} \boldsymbol{z}_{-j}(t)),
\end{align}

\noindent where $x_r$ and $y_r$ correspond to the x and y coordinates of other neighboring BSs $r$. This completes the proof.

%
%

\def\baselinestretch{0.92}
\bibliographystyle{IEEEtran}
\bibliography{references}
\vspace{-0.4cm}
\end{document}